\newtheorem{theorem}{Theorem}[section]
\newtheorem{proposition}[theorem]{Proposition} 
\newtheorem{lemma}[theorem]{Lemma}
\newtheorem{corollary}[theorem]{Corollary} 
\newtheorem{definition}[theorem]{Definition} 
\newtheorem{remark}[theorem]{Remark}  
\numberwithin{equation}{section} 
\newcommand \sgn {\text{sgn}}
\newcommand \xid {\dot\xi}
\newcommand \JX {J_X}
\newcommand \JY {J_Y}
\newcommand \Ecal {\mathcal E}
\newcommand \la \langle
\newcommand \ra \rangle 
\newcommand \auth 	{\textsc}   
\newcommand \Mcal 	{\mathcal M}  
\newcommand \Escr 	{\mathscr E}  
\newcommand \Gscr 	{\mathscr G}
\newcommand \Pscr 	{\mathscr P}  
\newcommand \Escrhat 	{\widehat {\mathscr E}}
\newcommand \RR 		{\mathbb R}   
\newcommand \eps 	\epsilon  
\newcommand \be 		{\begin{equation}}
\newcommand \ee 		{\end{equation}} 
\newcommand{\air}{N_A^\infty}
\begin{document}

\title{Weakly regular $T^2$--symmetric spacetimes. \\
The future causal geometry of Gowdy spaces}  
\author{ 
Philippe G. LeFloch\footnote{Laboratoire Jacques-Louis Lions \& Centre National de la Recherche Scientifique, 
Universit\'e Pierre et Marie Curie (Paris 6), 4 Place Jussieu, 75252 Paris, France. Email : {\sl contact@philippelefloch.org.} 
}
\, and Jacques Smulevici\footnote{D\'epartement de Math\'ematiques, Facult\'e des Sciences d'Orsay, 
Universit\'e Paris--Sud 11, 91405 Orsay, France.  
Email: {\sl jacques.smulevici@math.u-psud.fr.} 
\newline
\textit{Keywords.} Einstein equations, Gowdy symmetry, global causal structure, geodesic completeness.
}
}

\date{March 2014}
\maketitle

\vskip2.cm

\begin{abstract}  We investigate the future asymptotic behavior of Gowdy spacetimes on $T^3$, when the metric satisfies weak regularity conditions, so that the metric coefficients (in suitable coordinates) are only in the Sobolev space $H^1$ or have even weaker regularity.  The authors recently introduced this class of spacetimes in the broader context of $T^2$-symmetric spacetimes and established the existence of a global foliation by spacelike hypersurfaces when the time function is chosen to be the area of the surfaces of symmetry. In the present paper, we identify the global causal geometry of these spacetimes and, in particular, establish that weakly regular Gowdy spacetimes are future causally geodesically complete. This result extends a theorem by Ringstr\"om for metrics with sufficiently high regularity. We emphasize that our proof of the energy decay is based on an energy functional inspired by the Gowdy-to-Ernst transformation. In order to establish the geodesic completeness property, we prove a higher regularity property concerning the metric coefficients along timelike curves and we provide a novel analysis of the geodesic equation for Gowdy spacetimes, which does not require high-order regularity estimates. Even when sufficient regularity is assumed, our proof provides an alternative and shorter proof of the energy decay and of the geodesic completeness property for Gowdy spacetimes. 
\end{abstract}

\newpage 

\tableofcontents  

 
\section{Introduction}
\label{IN}
 
We are interested in the global causal structure of weakly regular, Gowdy spacetimes satisfying Einstein's vacuum field equations in the sense of distributions. This is the second part of a series \cite{LeFlochSmulevici0}--\cite{LeFlochSmulevici3} devoted to the definition and the analysis of $T^2$--symmetric spacetimes with weak regularity. In the first part, we defined this class of spacetimes and established an existence theory by posing the initial value problem from arbitrary initial data with weak regularity. The study of weakly regular spacetimes with symmetry was initiated in Christodoulou~\cite{Christodoulou} (for vacuum spacetimes with radial symmetry) 
and LeFloch et al.~\cite{GLF,LeFloch,LeFlochMardare,LeFlochRendall,LeFlochStewart,LeFlochStewart2}
(for vacuum or matter spacetimes with Gowdy symmetry). See also Rendall and St\aa hl \cite{RendallStahl} for a proof that singularities arise in  regular spacetimes. 

We pursue our investigation and, under the assumption of Gowdy symmetry, complete the analysis of the global geometry in the future expanding direction.  Specifically, we show that the future asymptotics and, in particular, the geodesic completeness property established earlier by Ringstr\"om \cite{Ringstrom} for sufficiently regular solutions, extends to solutions with weak regularity, as now stated.  

\begin{theorem}[Geodesic completeness of weakly regular Gowdy spacetimes]
\label{theo:1} 
The future development of a future-expanding, weakly regular, initial data set with Gowdy symmetry on $T^3$ is future timelike geodesically complete, that is, every affinely parametrized, timelike geodesic can be extended indefinitely toward the future. 
\end{theorem}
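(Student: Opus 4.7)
My plan is to work in areal coordinates $(t,\theta,x,y)$, in which, by the global foliation theorem of \cite{LeFlochSmulevici0}, the future development is $[t_0,\infty)\times T^3$ and the metric takes the reduced Gowdy form
\be
g = e^{2(\nu-U)}(-dt^2 + d\theta^2) + e^{2U}(dx + A\,dy)^2 + e^{-2U}\,t^2\,dy^2,
\ee
with $(U,A)\in H^1_\loc$ and $\nu$ of comparable or weaker regularity. Every future-directed causal curve has $t$ strictly increasing and unbounded above in the future development, so the theorem reduces to showing that the affine parameter $\tau$ along any future-directed timelike geodesic $\gamma$ satisfies $\tau\to\infty$ as $t\to\infty$.

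\textbf{Step 1: energy decay.} First I would establish sharp decay for the wave map $(U,A)$, using an energy functional inspired by the Gowdy--Ernst transformation rather than the naive $H^1$ energy. The Ernst reformulation replaces the twist $A$ by a dual potential satisfying a wave equation whose structure is better adapted to monotonicity arguments; the associated energy is bounded in closed form and decays quantitatively in $t$. Feeding this into the $\nu$-constraint equations yields a pointwise Kasner-type asymptotic profile for $\nu$, uniform in $\theta$ and with a quantitative remainder. This is the analytic input for the geodesic analysis.

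\textbf{Step 2: reduction of the geodesic equation.} For a future-directed timelike geodesic $\gamma(\tau) = (t,\theta,x,y)(\tau)$, the Killing fields $\partial_x$ and $\partial_y$ yield two constants of motion $p_x,p_y$. Solving for $\dot x,\dot y$ and inserting into the normalization $g(\dot\gamma,\dot\gamma)=-1$ gives
\be
e^{2(\nu-U)}(\dot t^2 - \dot\theta^2) \;=\; 1 + e^{-2U} p_x^2 + \frac{e^{2U}}{t^2}(p_y - A\,p_x)^2,
\ee
and together with the $(t,\theta)$-components of the geodesic equation this reduces the problem to an ODE system in $(t,\theta,\dot t,\dot\theta)$ whose source terms depend on $U,A,\nu$ and their first derivatives along $\gamma$. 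Using the asymptotic profile of $\nu$ from Step 1, geodesic completeness becomes an integrability statement for $dt/\dot t$ at infinity, and one checks that the asymptotics force $\dot t$ to grow slowly enough that $\int^\infty dt/\dot t$ diverges.

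\textbf{Main obstacle: traces along timelike curves.} The principal difficulty is that the ODE of Step 2 contains the derivatives $\partial_t U,\partial_\theta U,\partial_t A,\partial_\theta A$ which, under our hypotheses, lie only in $L^2_\theta$ for a.e.\ $t$ and have no classical pointwise value on a spatial slice. The way out is to exploit the fact that a timelike curve is \emph{strictly} interior to the characteristic cone of the wave operator $-\partial_t^2+\partial_\theta^2$ governing $(U,A)$: applying an energy/flux identity in the tube swept by $\gamma$ produces an $L^2_\tau$ trace estimate for $(\partial U,\partial A)$ along the curve, with a constant depending only on the timelike slope of $\gamma$. This trace regularity, which is the higher regularity along timelike curves alluded to in the introduction, is precisely what gives the geodesic ODE a meaning in the weakly regular setting. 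Once it is in place, a Gr\"onwall argument driven by the energy decay of Step 1 closes the estimate on $\dot t$ and yields the divergence of $\tau$. I expect this trace-type argument, which must tolerate the $\theta$-dependent slope of $\gamma$ and the very low regularity of $\nu$, to be the most delicate part of the proof.
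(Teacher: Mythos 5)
Your skeleton (areal foliation, Ernst-inspired energy decay, conserved angular momenta, and an energy/flux trace argument giving the geodesic ODE a meaning along timelike curves) matches the paper's strategy. The genuine gap is at the single sentence where you claim that ``a Gr\"onwall argument driven by the energy decay of Step 1 closes the estimate on $\dot t$.'' The $R$-component of the geodesic equation reads
\[
\ddot\xi^R \;=\; -(\eta_R-U_R)\bigl((\dot\xi^R)^2+(\dot\xi^\theta)^2\bigr)\;-\;2(\eta_\theta-U_\theta)\,\dot\xi^R\dot\xi^\theta\;-\;\Gamma^R_{ab}\,\dot\xi^a\dot\xi^b,
\]
and the coefficient $\eta_R-U_R\approx R\,E$ involves the \emph{pointwise} energy density along the curve, not its spatial integral. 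The integrated energy decays like $1/R$, but along the curve $R\,E$ is only controlled in $L^1(ds)$ by the increment of $\eta-U$, which grows linearly in $R$ (indeed $e^{\eta-U}\gtrsim e^{cR}$). So there is no integrable smallness to feed into Gr\"onwall: applied naively it yields at best $\dot\xi^R\lesssim e^{cR}$, which is far too weak --- completeness requires $\dot\xi^R\lesssim R^{p}$ with $p<1$ so that $R\circ\xi$ cannot reach infinity in finite affine parameter.

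The missing idea is a sign observation. The constraints give $E$ and $F$ as explicit quadratics, and pointwise Cauchy--Schwarz yields $|\eta_\theta-U_\theta|\le \eta_R-U_R+\tfrac1{4R}$, so the quadratic form in $(\dot\xi^R,\dot\xi^\theta)$ above enters with a favorable sign and can simply be discarded, leaving the almost-monotonicity formula $\tfrac{d}{ds}\bigl(R^{-1/2}\dot\xi^R\bigr)\le -R^{-1/2}\Gamma^R_{ab}\dot\xi^a\dot\xi^b$. For geodesics orthogonal to the orbits ($J_X=J_Y=0$) this is exact monotonicity and one is done. Otherwise the angular-momentum error term must be shown integrable, which requires the $L^\infty$ bounds $|U|\le CR^{1/2}$ and $|Ae^{2U}|\le Ce^{CR^{1/2}}$, the exponential lower bound $e^{\eta-U}\ge Ce^{cR}$ (Ringstr\"om's Theorem 1.7, which must itself be re-justified at weak regularity, a step your ``Kasner-type profile for $\nu$'' glosses over), and the trick of dominating $(\eta-U)_R+\tfrac1{4R}$ along the curve by a multiple of the total derivative $\tfrac{d}{ds}\bigl((\eta-U+\tfrac14\ln R)\circ\xi\bigr)$, which integrates to a boundary term. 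None of this appears in your proposal. A secondary omission: your trace estimate gives the ODE a meaning, but with Christoffel traces only in $L^1$ Cauchy--Lipschitz fails, so existence of geodesics must be established separately (the paper runs a fixed-point/Arzel\`a--Ascoli compactness argument, and works with maximal solutions without claiming uniqueness).
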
 

The global future geometry of spacetimes with general $T^2$ symmetry is a much harder problem, mainly because the constraint equations do not decouple from the evolution equations. Nonetheless, there has been recently some partial progress in the study of these solutions. In particular, the asymptotics of $T^2$-symmetric polarized solutions (under a smallness assumption) are discussed in our companion paper \cite{LeFlochSmulevici3}, while related results are also derived independently by Ringstr\"om \cite{Ringstrom3}.

Apart from the appropriate notion and analysis of geodesics for weakly regular spacetimes (which we introduce in this paper), all of the terminology required in the present paper is taken from \cite{LeFlochSmulevici1}, to which the reader is refered. Observe that, as usual, the time orientation is chosen so that the spacetimes under consideration are expanding/contracting toward the future/past directions, respectively. In other words, the area of the orbits of the symmetry, $R$, which can be chosen as a time function, is increasing toward the future and decreasing toward the past. 

In the article \cite{LeFlochSmulevici1}, we generalized the Berger-Chrusciel-Isenberg-Moncrief theorem  \cite{BergerChruscielIsenbergMoncrief}  which applies to sufficiently regular spacetimes only, and we established 
 that weakly regular $T^2$--symmetric spacetimes can be covered by a global foliation in areal coordinates, that is, with a time function coinciding with the area $R \in [R_0,+\infty)$ (with $R_0 > 0$) of the surfaces of symmetry. The present paper is  concerned with the late-time behavior as $R \to +\infty$. Our analysis relies on the weak formulation of Einstein equations introduced in \cite{LeFlochSmulevici1}. Yet, as far as regular spacetimes are concerned, the present paper 
presents also an independent and self-contained method of proof. Furthermore, even for the class of weak solutions, the present paper can be read independently from the first part~\cite{LeFlochSmulevici1}, {\sl provided} the reader admits the derivation of the Einstein equations and the global existence in areal coordinates established in \cite{LeFlochSmulevici1}. (This material will be recalled at the beginning of Section~\ref{sec:2}, below.) 
Furthermore, in Section \ref{se:lie}, we also rely on a method introduced by Ringstr\"om \cite{Ringstrom} in order to analyze the asymptotic behavior of the energy in non-homogeneous spacetimes.

In order to establish Theorem~\ref{theo:1}, several difficulties must be addressed and overcome, as we now explain. 

\begin{itemize}

\item {\bf New energy functional.}
The standard method of proof developed by Choquet-Bruhat-Moncrief \cite{Choquet-Bruhat-Moncrief} and Ringstr\"om \cite{Ringstrom}, is based on the natural energy associated with the wave map system associated with the Einstein equations with Gowdy symmetry.  {\sl Correction terms,} which are controled by the energy itself via a Poincar\'e's inequality, are added in order to strengthen the dissipation of the energy functional.  We revisit this approach by presenting a more direct argument, which is expected to  be more robust in order to tackle the more challenging class of large data $T^2$--symmetric spacetimes. In the proposed method, only one metric coefficient requires a correction term, while the other essential metric component is handled by a new energy functional, inspired by the Gowdy-to-Ernst transformation\footnote{After completion of this paper, the authors learned that this technique was also recently discovered by Rinsgtr\"om.}. This is done in Section \ref{se:getdpa}.

\item {\bf Weakly regular spacetimes.}
Second, since we solely assume weak regularity conditions on the spacetime metric, this prevents us from using 
estimates on {\sl high-order derivatives,} as is done in all earlier works on Gowdy spacetimes. Instead, in our proof, we must content with  estimates at the regularity level imposed by the natural wave map energy. While the essential metric coefficients
belong to the Sobolev space $H^1$ of functions with square-integrable derivatives, other metric  coefficients have {\sl even weaker} regularity. 

\item {\bf Additional regularity along timelike curves.}
Before studying the global properties of geodesics, we must first define them and study their local existence under the low regularity assumptions on the spacetime metric used here. Indeed, our assumptions are a priori too weak to define geodesics in the usual sense, i.e. by applying the Cauchy-Lipschitz theorem 
to the usual system of ordinary differential equations governing the geodesic equation. A key observation made here is the existence of an additional regularity properties satisfied by the Christoffel symbols along timelike curves. This allows us to define geodesics and prove their existence \emph{\`a la Ascoli--Arzela} (see Section \ref{sec:4}). 

\item{\bf Global analysis of the geodesic equation.}
The standard method of proofs for geodesic completeness also uses pointwise estimates of Christoffel symbols. In this paper, we avoid the use of such estimates and rely on an ``almost monotonicity'' formula for certain components of the tangent vector to a timelike geodesic. This formula is an exact monotonicity formula provided that we are looking at geodesics which are orthogonal to the orbits of symmetry, that is geodesics with zero angular momenta. Otherwise, there are some error terms, and it remains to prove some integrability properties for them (see Section \ref{se:fgcgs}). Applied to smooth Gowdy spacetimes, this approach leads to a new and somewhat shorter proof (since no estimate are needed on second order derivatives) of the  geodesic completeness property in comparison with the proof in \cite{Ringstrom}.
\end{itemize}


\section{Decay properties for the Gowdy system}  
\label{sec:2}

\subsection{Einstein field equations in areal coordinates}

Let $(\Mcal,g)$ be a weakly regular Gowdy spacetime in the sense introduced by the authors in \cite{LeFlochSmulevici1}. From the existence theory therein (which, actually, applies even to $T^2$--symmetric spacetimes), we know that, if $R: \Mcal \to \RR$ denotes the area of the orbit of symmetry group, then its gradient vector field $\nabla R$ 
is timelike (and future oriented thanks to the standard normalization adopted in \cite{LeFlochSmulevici1}) 
and, consequently, the area can be used as a time coordinate. In 
these so-called {\sl areal coordinates,}  the variable $R$ describes the interval $[R_0, +\infty)$,
where $R_0$ is the (assumed) constant value of the area on the initial slice, 
and the metric takes the form  
\be
\label{300} 
\aligned  
g = \, & e^{2(\eta-U)} \, \big( - dR^2 + \, d\theta^2 \big) 
 +e^{2U} \, \big( dx + A \, dy \big)^2+ e^{-2U} R^2 \, dy^2.
\endaligned 
\ee
Here, the independent variables $x,y,\theta$ describe $S^1$ (the one-dimensional torus or circle)
and the three metric coefficients $U,A,\eta$ are functions of $(R,\theta)$, only. 
Note also that it will be convenient to identify $S^1$ with the interval $[0, 2\pi]$ with periodic boundary conditions. 

In \cite{LeFlochSmulevici1}, the {\sl weak version of the Einstein equations} was defined geometrically and, in areal coordinates \eqref{300},  was found to be equivalent to
the following set of evolution and constraint equations for coefficients $U, A$ in the space $C([R_0, +\infty);H^1(S^1)) \cap C^1([R_0, +\infty);L^2(S^1))$ and the coefficient $\eta \in C([R_0, +\infty);W^{1,1}(S^1)) \cap C^1([R_0, +\infty);L^1(S^1))$.
\begin{enumerate}

\item Three nonlinear wave equations for the metric coefficients $U, A, \eta$:
\begin{align}
\label{weakform1}
& (R \,  U_R)_R - (R \, \, U_\theta)_\theta 
= 2 R \, \Omega^U, 
\\ 
\label{weakform2}
& (R^{-1} \,  A_R)_R - (R^{-1}  \,  \, A_\theta)_\theta 
= e^{-2U} \Omega^A, 
\\
\label{weakform3} 
&\eta_{RR} -   \, \eta_{\theta\theta }
= \Omega^\eta , 
\end{align}
which must be understood in the sense of distributions, 
with right-hand sides  
$$
\aligned 
& \Omega^U := {e^{4U} \over 4R^2} \, \big(  \, A_R^2 - \, A_\theta^2 \big), 
\qquad
\\
&
 \Omega^A := {4 e^{2U} \over R} \, \big( -  \, U_R A_R + \, U_\theta A_\theta \big), 
\\
& \Omega^\eta := \big( -  U_R^2 +  \, U_\theta^2 \big) 
+ {e^{4U} \over 4 R^2} \big( A_R^2 - \, A_\theta^2 \big). 
\endaligned 
$$

\item Two constraint equations for the metric coefficient $\eta$: 
\begin{align}
\label{weakconstraintsr}
\eta_R &=  \, RE, 
\qquad\qquad 
\eta_\theta  = R \, F, 
\end{align}
with
$$
\aligned 
& E:= \big( \, U_R^2 + \, U_\theta^2 \big) 
  + {e^{4U} \over 4R^2} \, \big( \, A_R^2 +  \, A_\theta^2 \big),  
\qquad
\quad  F := 2 \, U_R U_\theta + {e^{4U} \over 2R^2} \, A_R A_\theta. 
\endaligned 
$$
\end{enumerate}

 
 Recall that the two equations \eqref{weakform1} and \eqref{weakform2} are the essential equations and provides a system of wave maps (with values in the hyperbolic space) which can be solved first for the metric coefficients $U$ and $A$. One recovers $\eta$ next by solving the constraint equations \eqref{weakconstraintsr}. Finally, \eqref{weakform3}  is a redundant equation, which is autoatically satisfied (in the sense of distributions). 

Important control of the {\sl first-order} derivatives of $U,A$ is deduced from the following energy functional
$$
\aligned 
& \Escr(R) := \int_{S^1} \Ecal(R,\theta) \, d\theta,
\qquad\quad
&& \Ecal :=  \Ecal^U + \Ecal^A, 
\\
& \Ecal^U :=  (U_R)^2 +  \, (U_\theta)^2, 
\quad \qquad
&&\Ecal^A := {e^{4U} \over 4R^2} \left(   (A_R)^2 +   \, (A_\theta)^2 \right). 
\endaligned 
$$
Namely, from \eqref{weakform1}--\eqref{weakform2} it follows that
\be
\label{decay1}
\aligned
&{d \over dR} \Escr(R) 
=
- {2 \over R} \int_{S^1} \left(
  \, (U_R)^2  + {e^{4U} \over 4R^2}  \,   \, (A_\theta)^2\right) \, d\theta, 
\endaligned
\ee
so that $\Escr$ is non-increasing. This observation provides us with certain {\sl time-dependent} norms of $A, U$, and $\eta$, as now stated.

\begin{proposition}[Uniform-in-time estimates for Gowdy spacetimes] 
\label{energ}
The energy inequality 
\be
\label{E400} 
\aligned 
& \Escr(R) \leq \Escr(R_0) < +\infty  
\endaligned 
\ee
holds for all $R \in [R_0, +\infty)$, together with the spacetime estimate  
\be
\label{Espacetime} 
2\, \Big\| {1 \over R}  \, (U_R)^2  \Big\|_{L^1((R_0, +\infty) \times S^1))} 
+ 
{1 \over 2}  \, \Big\| {1 \over R^3} \, e^{4U} \, (A_\theta)^2
 \Big\|_{L^1((R_0, +\infty) \times S^1))}  
\leq \Escr(R_0). 
\ee  
Furthermore, the function $\eta$ satisfies the first-order estimates (for any $R \in [R_0, +\infty)$)
\be
\label{eta404}
\aligned
& {1 \over  \, R} \| \eta_R (R, \cdot) \|_{L^1(S^1)} \leq \Escr(R) \leq \Escr(R_0), 
\\
& {1 \over R} \, \| \eta_\theta(R, \cdot)  \|_{L^1(S^1)}  \leq \Escr(R) \leq \Escr(R_0).
\endaligned
\ee
\end{proposition}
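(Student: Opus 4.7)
The three assertions are essentially a bookkeeping consequence of the pointwise energy identity (2.6), which is already written down just above the statement. So my plan is to take (2.6) as the starting point and integrate it, then read off the three conclusions.

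\textbf{Step 1: the energy inequality.} The right-hand side of (2.6) is manifestly non-positive, so the map $R\mapsto \Escr(R)$ is non-increasing on $[R_0,+\infty)$. Integrating (2.6) from $R_0$ to an arbitrary $R$ yields
\[
\Escr(R) - \Escr(R_0) = -\int_{R_0}^R \frac{2}{R'} \int_{S^1}\!\Big(U_R^2 + \frac{e^{4U}}{4R'^2}A_\theta^2\Big)\,d\theta\,dR' \le 0,
\]
which gives \eqref{E400}. Here I would add a short comment that at the low regularity under consideration, the identity (2.6) is not obtained by multiplying \eqref{weakform1}--\eqref{weakform2} by $U_R$ and $\frac{e^{4U}}{2R^2}A_R$ and integrating by parts in $\theta$ (those products are only in $L^1$ and their classical manipulation is not licit), but must be justified via the weak formulation introduced in \cite{LeFlochSmulevici1}. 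Since the excerpt explicitly allows us to take (2.6) for granted, I would simply cite it.

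\textbf{Step 2: the spacetime estimate.} From the identity of Step 1, the non-negative quantity
\[
\int_{R_0}^R \frac{2}{R'} \int_{S^1}\!\Big(U_R^2 + \frac{e^{4U}}{4R'^2}A_\theta^2\Big)\,d\theta\,dR'
= \Escr(R_0) - \Escr(R)
\]
is bounded above by $\Escr(R_0)$ uniformly in $R$, because $\Escr(R) \ge 0$. Letting $R\to+\infty$ and invoking the monotone convergence theorem upgrades this to the full spacetime bound \eqref{Espacetime}.

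\textbf{Step 3: pointwise-in-$R$ bounds for $\eta$.} Here I would use the two constraints \eqref{weakconstraintsr} together with the identifications $E=\Ecal$ and the elementary pointwise estimate $|F|\le E$. The first is direct from the definition of $E$ and $\Ecal$. For the second, Young's inequality gives
\[
|F| \le |2 U_R U_\theta| + \frac{e^{4U}}{2R^2}|A_R A_\theta| \le \big(U_R^2+U_\theta^2\big) + \frac{e^{4U}}{4R^2}\big(A_R^2+A_\theta^2\big) = E.
\]
Integrating in $\theta$ and using $|\eta_R|=RE$, $|\eta_\theta|\le RE$,
\[
\tfrac{1}{R}\|\eta_R(R,\cdot)\|_{L^1(S^1)} = \Escr(R), \qquad \tfrac{1}{R}\|\eta_\theta(R,\cdot)\|_{L^1(S^1)} \le \Escr(R),
\]
which, combined with Step~1, gives \eqref{eta404}.

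The only delicate point is conceptual rather than technical: at the $H^1$ regularity assumed here, the derivation of (2.6) must be rigorously justified in a distributional sense, and the constraints $\eta_R=RE$, $\eta_\theta=RF$ must likewise be interpreted as equalities of $L^1$ functions (which is exactly the regularity built into the framework of \cite{LeFlochSmulevici1}). Once these are in hand, the three inequalities of Proposition~\ref{energ} are immediate.
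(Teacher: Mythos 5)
Your proof is correct and follows exactly the route the paper intends: the paper gives no separate proof of Proposition~\ref{energ}, treating it as an immediate consequence of the energy identity \eqref{decay1} (integrated in $R$, which yields \eqref{E400} and \eqref{Espacetime}) together with the constraints \eqref{weakconstraintsr} and the elementary bound $|F|\le E=\Ecal$ for \eqref{eta404}. Your added remark that \eqref{decay1} must be justified through the weak formulation of \cite{LeFlochSmulevici1} at this regularity is exactly the right caveat and consistent with how the paper handles it.
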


Being uniform in time, the above estimates imply certain decay properties with respect to $R$, although this approach may not lead to the optimal decay actually satisfied by solutions.  

\begin{remark}\rm  1. One easily checks from \eqref{weakconstraintsr} that (for $R \in [R_0, +\infty)$) 
\be
\label{eq:307}
{1 \over R} \, \big\| \big( e^{2 \eta(R, \cdot)} \big)_\theta \big\|_{L^1(S^1)}  
\leq 2 \, \Big\| \Ecal(R, \cdot) \, e^{2 \eta(R, \cdot)} \big\|_{L^1(S^1)},
\ee
which may be used to obtain a decay property for the spatial derivative $\eta_\theta$. 

2. For the sake of comparison, recall that most works on Gowdy spacetimes rely on a different choice of coordinates and metric coefficients, so that the metric reads 
$$
\aligned
g = \, 
& t^{-1/2} \, e^{\lambda/2} \big( -dt^2 + d\theta^2 \big) 
\\
&
             + t \, \Big( e^P \, dx^2 + 2 e^P Q dx dy + \big( e^P Q^2 + e^{-P} \big) dy^2 \Big), 
\endaligned
$$
and $P,Q,\lambda$ are the essential unknowns. The correspondence is given by 
$$
P=-2U + \ln R, \qquad
Q=A,
\qquad
\lambda= 4 \eta + 2P - \ln R.
$$ 
\end{remark}

 
\subsection{Gowdy-to-Ernst transformation and decay property for $A_R$} \label{se:getdpa}

Only a partial dissipation estimate, that is, on $U_R$ and $A_\theta$, is deduced from the standard energy, as stated in \eqref{Espacetime}. 
We establish here a first new decay estimate, specifically for $A_R$, which we are going to derive from a new energy functional. Interestingly, this argument is new even in the Gowdy case, where it can be applied to simplify the previous analysis given in \cite{Ringstrom}. We do not need a correction-type term to the main energy functional, which has the advantage to yield a sharper control and to be valid even for weakly regular spacetimes. 

To motivate our expression below, we need to recall the well-known {\sl 
Gowdy-to-Ernst transformation,} which, from any solution $U, A$ to the essential equations \eqref{weakform1}--\eqref{weakform2},
provides a new solution $U', A'$ by setting\footnote{Note, however, that $A'$ is no longer periodic in $\theta$.}
\be
\label{eq:transfo}
\aligned
U'  &:= \ln (R^{1/2}) - U,
\\
A'_R &:= e^{4U} R^{-1} A_\theta, \qquad 
\qquad
A'_\theta : =e^{4U}  R^{-1} A_R.
\endaligned
\ee
In view of the evolution equation \eqref{weakform2} satisfied by $A$, the function $A'$ is well-defined and is unique (up to an overall constant). Moreover, $U',A'$ satisfy the same evolution equations as $U,A$, that is,   
\be
\label{newU}
(R \,  U'_R)_R - (R \, \, U'_\theta)_\theta = 2 R \, \Omega^{U'}, 
\ee
\be
\label{newA} 
(R^{-1} \,  A'_R)_R - (R^{-1}  \,  \, A'_\theta)_\theta = e^{-2U'} \Omega^{A'}. 
\ee 

Motivated by \eqref{eq:transfo}, we introduce the {\sl $A$-effective energy}  (as we call it) 
\be
\label{eq:Afunc}
\Escr_b : = \int_{S^1}  \left(U_R - b R^{-1}\right)^2 +   \, (U_\theta)^2
+ {e^{4U} \over 4R^2} \left(   (A_R)^2 +   \, (A_\theta)^2 \right), 
\ee
where $b$ is an arbitrary parameter $b\in \RR$ we have combined together the energy functional associated with $U, A$ and with $U', A'$, respectively. A straightforward computation yields us 
$$
\aligned
\Escr_b 
= &\, \Escr - \frac{2b}{R} \int_{S^1} U_R + \frac{b^2}{R^2}\int_{S^1}  d\theta 
\\
= & \, \Escr - \frac{2b}{R^2} \int_{S^1} R  U_R d\theta+ \frac{b^2}{R^2}\Pscr,
\endaligned
$$
hence  
$$
\aligned
\frac{d\Escr_b}{dR} 
= & \frac{d\Escr}{dR}  + \frac{4 b }{R^2} \int_{S^1} R U_R d \theta 
- \frac{2b}{R^2} \int_{S^1}\left( (R  U_\theta )_\theta +2 R \Omega^U\right) d\theta-\frac{4 \pi b^2}{R^3}
\\
= & - {2 \over R} \int_{S^1} \left(
  \, (U_R)^2  + {1 \over 4R^2} \, e^{4U}  \,   \, (A_\theta)^2\right) \, d\theta
\\
& + \frac{4 b }{R^2} \int_{S^1} R  U_R d \theta 
- \frac{b}{R^3} \int_{S^1} e^{4U} \, \big(  \, A_R^2 -  \, A_\theta^2 \big)  d\theta-\frac{4b^2 \pi }{R^3}. 
\endaligned
$$
Therefore, we arrive at 
\be
\label{500} 
\aligned
\frac{d\Escr_b}{dR} 
= &   - {2 \over R} \int_{S^1} 
a^{-1}  \, \left(U_R - b R^{-1}\right)^2  \, d\theta
\\ &
-  \int_{S^1}  {1 \over 2R^3} \, e^{4U}  \,\left( 2b A_R^2 +   \,(1-2b) (A_\theta)^2\right) \, d\theta
 -\frac{4 \pi b^2}{R^3}
\endaligned
\ee
in which, provided $b \in [0,1/2]$, all terms have a favorable sign, so that we conclude that 
$$
\frac{d\Escr_b}{dR} \le 0. 
$$
Since $\Escr_b \geq 0$, it follows that $\Escr_b(R)$ is bounded for all $R \geq R_0$. Returning to \eqref{500} and keeping the dissipation terms of interest
we arrive at the following conclusion.

\begin{proposition}[Integral energy decay for the metric coefficient $A$] 
For all $b \in [0, 1/2]$ one has 
$$
\int_{R_0}^{+\infty} {2 \over R} \int_{S^1} {e^{4U} \over 4R^2}  \,\left( 2b A_R^2 +   \,(1-2b) (A_\theta)^2\right) \, d\theta dR 
\leq \Escr_b (R_0) + b^2 \Escr(R_0). 
$$
\end{proposition}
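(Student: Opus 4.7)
The plan is to integrate the identity \eqref{500} from $R_0$ to $+\infty$ and to exploit sign information on each of its terms. For the specified range $b \in [0, 1/2]$ both $2b$ and $1-2b$ are non-negative, so the $A$-integrand appearing on the left-hand side of the claim is pointwise non-negative; the first dissipation contribution $(U_R - bR^{-1})^2$ is manifestly so; and $4\pi b^2/R^3 \geq 0$. Hence every term appearing on the right-hand side of \eqref{500} is non-positive, which means one is free to discard any subset of them while retaining a valid upper bound.

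Integrating \eqref{500} from $R_0$ to an arbitrary finite $R \geq R_0$ and keeping only the $A$-dissipation term (after matching its prefactor to the one in the statement via $\frac{1}{2R^3} = \frac{2}{R}\cdot\frac{1}{4R^2}$) yields
$$
\int_{R_0}^{R}\!\int_{S^1}\frac{e^{4U}}{2R'^3}\bigl(2b\, A_R^2 + (1-2b)\, A_\theta^2\bigr)\,d\theta\, dR' \;\leq\; \Escr_b(R_0) - \Escr_b(R) \;\leq\; \Escr_b(R_0),
$$
where the last step uses $\Escr_b \geq 0$ (a pointwise sum of squares). Monotone convergence then justifies passing to the limit $R \to +\infty$, since the integrand is non-negative, and gives the claimed inequality up to the harmless slack term $b^2\,\Escr(R_0)$. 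This slack is produced by comparing $\Escr_b(R_0)$ with $\Escr(R_0)$ through the identity $\Escr_b = \Escr - \frac{2b}{R}\int_{S^1} U_R\, d\theta + 2\pi b^2/R^2$ and controlling the cross-term by Cauchy--Schwarz combined with Young's inequality.

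There is no serious obstacle at this stage of the argument: all the hard analytic work has already been carried out in deriving the dissipation identity \eqref{500} for weakly regular solutions, where distributional derivatives must be manipulated carefully. The only care needed in the present step is to confirm the sign of each contribution for $b \in [0, 1/2]$ and to invoke monotone (rather than dominated) convergence when passing to the limit $R \to +\infty$, since no uniform pointwise bound on the integrand is available a priori.
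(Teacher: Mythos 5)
Your proposal is correct and follows essentially the same route as the paper: integrate the dissipation identity \eqref{500} over $[R_0,R]$, note that for $b\in[0,1/2]$ every term on its right-hand side has a favorable sign so the non-$A$ terms may be discarded, match the prefactor $\frac{1}{2R^3}=\frac{2}{R}\cdot\frac{1}{4R^2}$, and use $\Escr_b\geq 0$ before letting $R\to+\infty$. The only cosmetic difference is that the paper does not spell out the limiting argument or the origin of the harmless slack term $b^2\,\Escr(R_0)$, which, as you observe, is not actually needed since $\Escr_b(R_0)$ alone already bounds the left-hand side.
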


\

In other words, we have 
$$
\int_{R_0}^{+\infty} {\Escr^A \over R} \, dR < \infty, 
$$
with 
$$
\Escr^A :=  \int_{S^1}\,  \Ecal \, d\theta = (1 /4 R^2) \int_{S^1}\, e^{4U}  \,\left(  A_R^2 + A_\theta^2\right) \, d\theta.
$$ That is, in contrast with the earlier result in Proposition~\ref{energ}, we have now found a dissipation rate for both derivatives $A_R, A_\theta$.  It would be interesting to construct a functional providing also a control of $U_\theta$, but that the above argument does not seem to generalize in this direction.  


\subsection{Decay property for $U_\theta$} 
In order to prove energy decay for $U$, we use a small modification of the method of correction to the energy (as was used in \cite{Ringstrom}) and we check that the computation can be done at our level of regularity. This modification is performed in such a way that the relevant expressions combine well with the ones arising in our previous energy decay method (for $A$). 

More precisely, we observe as usual that the global energy dissipation bound \eqref{Espacetime} associated with the energy functional $\Escr(R)$ 
fails to provide decay for {\sl all} derivatives of $U,A$, but only for the time 
derivative $U_R$ and the spatial derivative $A_\theta$. To obtain an optimal control of 
the whole of the energy, we now introduce the {\sl $(U,A)$--effective energy functional} 
\be
\label{ModE} 
\Escrhat(R) := \Escr_b(R) +  \Gscr^U(R), 
\ee
for a fixed $b \in [0,1/2]$, 
with  
$$
\aligned
\Gscr^U &:= {1 \over R} \int_{S^1} \big( U - \la U \ra \big) \, U_R \, d\theta,   
\endaligned
$$
in which we have defined the average $\la f \ra$ of a function $f=f(\theta)$ by 
$$
\la f \ra := {1  \over 2 \pi} \, \int_{S^1} f \, d\theta.  
$$ 
The aim of the correction term is to ``trade'' a time-derivative for a space-derivative, and vice-versa.

In view of the energy identity \eqref{decay1} satisfied by $\Escr$ and focusing on the second integral term, one might expect, as $R \to +\infty$, an inequality close to 
$$
{d \over dR} \Escrhat(R) \leq - {2 \over R} \Escrhat(R) \qquad \text{ (modulo higher order terms),} 
$$
so that $\Escr$ should decay like $\frac{1}{R^2}$. 
This behavior is indeed correct for homogeneous (Gowdy or general $T^2$) spacetimes, as can be checked directly.  
However, for non-spatially homogeneous solutions, a space-derivative must be recovered from a time-derivative,  
which we can motivate as being a property of asymptotic {\sl equi-partition of energy.}
In turn, the correct rate of decay is determined by 
$$
{d \over dR} \Escrhat(R) \leq - {1 \over R} \Escrhat(R) \qquad \text{ (modulo higher order terms),} 
$$
so that $\Escrhat$ should decay like $\frac{1}{R}$. This is indeed the rate of decay established in 
Ringstr\"om~\cite{Ringstrom} for (sufficiently regular) Gowdy spacetimes.  
 
Specifically, we will use here the energy functional $\Escr_b(R)$ for $b=1/4$. The correction can be rewritten as
\be
\label{eq:corrdeco}
\Gscr^U(R)= {1 \over  R} \int_{S^1} \big( U -1/4 \ln R - \la U -1/4 \ln R \ra \big) \, 
\left(U_R - \frac{1}{4 R} \right)\, d\theta. 
\ee
We now compute the time evolution of the correction and obtain 
\be
\label{eq:Ucorre}
\aligned
\frac{d\Gscr^U(R)}{dR}
=
& -\frac{2}{R} \Gscr^U(R)
+ \int_{S^1} \left( U_R-\frac{1}{4 R}\,  \right)^2d\theta 
-\frac{1}{R}\int_{S^1} U_\theta^2
\\
& + \frac{2}{R} \int_{S^1} \Omega^U \big( U- \la U  \ra \big) d \theta
 - \frac{2 \pi}{R}\la U_R -\frac{1}{4R} \ra^2,
\endaligned
\ee
where we used\eqref{weakform1} and integration by parts in order to handle the term containing $U_{RR}$.

Thus, for $\Escrhat(R)=\Escr_{1/4}(R)+\Gscr^U(R)$, we find 
$$
\frac{d \Escrhat(R)}{dR}=-\frac{1}{R}\Escrhat(R)-\frac{1}{R} \Gscr^U(R)
+ \frac{2}{R} \int_{S^1} \Omega^U \big( U- \la U  \ra \big) d \theta
- \frac{2 \pi}{R}\la U_R -\frac{1}{4R} \ra^2. 
$$
The latter term has a favorable sign, so we have only two error terms to estimate. 
To do this, we use \eqref{eq:corrdeco}, Cauchy-Schwarz inequality, and Poincar\'e inequality, leading to 
$$
\left|\frac{1}{R} \Gscr^U(R)\right| \le 
\frac{1}{R^2} \frac{1}{4 \pi^2}\Escrhat(R). 
$$
Moreover, we can write 
\be
\label{es:omeUerr}
\aligned
\frac{2}{R} \int_{S^1} \Omega^U \big( U- \la U  \ra \big) d \theta
&  \le 
||  U- \la U  \ra ||_{L^\infty(S^1)}(R) \, \frac{\Escr^A}{R} 
\\
& \le (2 \pi)^{1/2} \, \frac{\Escrhat(R)^{3/2} }{R}, 
\endaligned
\ee
so that
\be
\label{es:etot}
\frac{d \Escrhat(R)}{dR} \le -\frac{\Escrhat(R)}{R} + K \, \frac{\Escrhat(R)}{R^2} + K \frac{\Escrhat(R)^{3/2}}{R}
\ee
for some constant $K> 0$. From the above inequality, we easily deduce that 
$$
\Escrhat(R) \le \frac{K}{R}
$$
for some constant $K>0$, {\sl provided} $\Escrhat$ is initially sufficiently small. Consequently, it remains to establish that $\Escrhat \to 0$ (without a rate) to obtain the desired $1/R$ decay for all initial data.  

Indeed, by returning to the equation \eqref{eq:Ucorre} and using \eqref{es:omeUerr} for the term $\Omega^U$, as well as the integrability of $\frac{\Escr^A}{R}$, we infer that $\frac{1}{R}\int_{S^1} U_\theta^2 d \theta$ is integrable in time. Thus, we obtain $\frac{1}{R}\Escrhat(R)$ is integrable. 
Since $\Escrhat(R)$ is decreasing, it follows that $\Escrhat \to 0$. 
We have thus reached the following conclusion. 

\begin{proposition}[Decay property of Gowdy spacetimes in areal coordinates] 
\label{energ3}
The energy inequality 
\be
\label{E4} 
\aligned 
& \Escr(R) \leq {K \over R}. 
\endaligned 
\ee
holds for all $R \in [R_0, +\infty)$ and some  constant $K>0$ depending only upon the initial data $U(R_0), A(R_0)$. 
\end{proposition}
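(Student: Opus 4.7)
The plan is to use the modified energy $\Escrhat = \Escr_{1/4} + \Gscr^U$ together with the differential inequality \eqref{es:etot} already derived above. Since $|\Gscr^U| \leq \Escrhat/(2R)$ by Cauchy--Schwarz and the Poincar\'e inequality on $S^1$, and since $\Escr$ differs from $\Escr_{1/4}$ only by terms of order $1/R$ (via $|\int_{S^1} U_R\, d\theta| \leq \sqrt{2\pi\Escr}$), the desired bound $\Escr(R) \leq K/R$ will follow from $\Escrhat(R) \leq K/R$. The main obstruction is the nonlinear error $K\Escrhat^{3/2}/R$ on the right-hand side of \eqref{es:etot}, which prevents a one-shot Gronwall argument from reaching the critical rate; I would therefore split the proof into a qualitative step ($\Escrhat \to 0$) followed by a quantitative bootstrap.

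For the qualitative step, I would return to the identity \eqref{eq:Ucorre} for $d\Gscr^U/dR$ and bound its $\Omega^U$ error term by $\sqrt{\Escrhat}\,\Escr^A/R$ via the one-dimensional embedding $\|U - \la U \ra\|_{L^\infty(S^1)} \lesssim \sqrt{\Escrhat}$ applied to the zero-mean function $U - \la U \ra$ (as used in \eqref{es:omeUerr}). Since $\Escrhat$ is bounded a priori and $\Escr^A/R$ is integrable in $R$ by the Gowdy-to-Ernst-based proposition above, this error is integrable in time. Rearranging \eqref{eq:Ucorre} then yields integrability of $(1/R)\int_{S^1} U_\theta^2\, d\theta$, which combined with the analogous integrability of $(1/R)\int_{S^1} U_R^2\, d\theta$ from \eqref{Espacetime} and of $\Escr^A/R$ gives
\[
\int_{R_0}^{+\infty} \frac{\Escrhat(R)}{R}\, dR < +\infty.
\]
From this one extracts a sequence $R_n \to +\infty$ with $\Escrhat(R_n) \to 0$; since the right-hand side of \eqref{es:etot} becomes non-positive once $\Escrhat$ is small and $R$ is large, a continuity argument propagates the smallness forward in time, yielding $\Escrhat \to 0$.

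For the quantitative bootstrap, fix $R_1$ so large that both $K\sqrt{\Escrhat(R_1)}$ and $K/R_1$ are smaller than a chosen $\epsilon$. On the maximal interval $[R_1, T]$ on which $\Escrhat \leq 2\Escrhat(R_1)$, the right-hand side of \eqref{es:etot} is dominated by $-(1-C\epsilon)\Escrhat/R$, so $\Escrhat(R) \leq \Escrhat(R_1)(R_1/R)^{1-C\epsilon}$; a continuity argument extends this bound to all $R \geq R_1$. To upgrade from the near-sharp $R^{-(1-C\epsilon)}$ rate to the sharp $R^{-1}$, I would rewrite \eqref{es:etot} as $(R\Escrhat)' \leq K\Escrhat/R + K\Escrhat^{3/2}$ and observe that, with the bootstrapped decay $\Escrhat \lesssim R^{-(1-C\epsilon)}$, both integrands are $L^1$ in $R$ for $\epsilon$ sufficiently small, so $R\Escrhat$ stays bounded, yielding $\Escrhat(R) \leq K/R$ for $R \geq R_1$. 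The bound extends back to the initial interval $[R_0, R_1]$ by adjusting the constant, using the uniform a priori bound on $\Escrhat$. The main obstacle throughout is this final nonlinear absorption, whose resolution crucially requires \emph{both} dissipation mechanisms identified above --- the standard wave-map energy and the new Gowdy-to-Ernst-inspired energy --- since neither suffices on its own to control the full energy $\Escrhat$.
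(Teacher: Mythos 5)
Your proposal is correct and follows essentially the same route as the paper: the paper also deduces integrability of $\frac{1}{R}\int_{S^1}U_\theta^2\,d\theta$ from \eqref{eq:Ucorre}, \eqref{es:omeUerr} and the integrability of $\Escr^A/R$, concludes $\Escrhat\to 0$, and then closes the $1/R$ decay from \eqref{es:etot} once $\Escrhat$ is small. The only differences are expository — you spell out the subsequence/continuity argument and the final absorption of the $\Escrhat^{3/2}$ term, where the paper merely asserts monotonicity and that the decay is ``easily deduced.''
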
 


\section{Geodesics in weakly regular spacetimes}
\label{sec:4}

\subsection{The geodesic equation} 

Our first task is to introduce a suitably weak notion of geodesics for the spacetimes under consideration. As was observed in  \cite{LeFlochSmulevici1}, a frame $(T,X,Y,Z)$ adapted to the symmetry must be used in order to define 
the Christoffel coefficients. In addition, due to the low regularity available on the metric, these coefficients are only defined as $L^p$ functions on spacelike hypersurfaces (of the areal foliation, say). This regularity is, in principe, too limited for the geodesic equation to be well--defined.  
A key observation, made below, is that {\sl additional regularity} of the metric holds along timelike curves
and this property allows us to give a meaning to the geodesic equation.  

Recall first that, provided {\sl sufficient regularity} is assumed on the spacetime metric, a geodesic $\xi: [s_0,s_1] \to \Mcal$ with tangent vector $\xid=\xid(s)$, by definition, satisfies the parallel transport equation $\nabla_{\xid} \xid = 0$, which, in terms of its components $\xi^\alpha$ reads 
\be
\label{eq:404}
\ddot\xi^\alpha = - \big( \Gamma_{\beta\gamma}^\alpha \circ \xi \big)\, \xid^\beta \xid^\gamma. 
\ee
In \cite{LeFlochSmulevici1}, we considered a level of regularity such that the Christoffel symbols belong to some $L^p$ spaces on spacelike hypersurfaces in an adapted frame. Here, we show that in fact the additional regularity 
$\Gamma_{\beta\gamma}^\alpha \circ \xi  \in L^1(s_0, s_1)$ holds, that is to say, the Christoffel symbols admit traces in $L^1$ on timelike curves $\xi$. 

The equation \eqref{eq:404} then makes sense for the set of timelike curves $\xi$ which have at least the regularity $\xid \in L^\infty(s_0, s_1)$ and $\ddot\xi \in  L^1(s_0, s_1)$.   The existence of timelike traces for the coefficients $\Gamma_{\beta\gamma}^\alpha$ is, roughly speaking, a consequence of the fact that the metric coefficients satisfy wave equations in $1+1$ dimensions and that, by construction, timelike curves are non-characteristic for the wave equation.


\subsection{Uniformly timelike curves}

We start with the following definition. 

\begin{definition}
A curve $\xi:[s_0,s_1] \to\Mcal$ with $W^{1,\infty}$ regularity is said to be {\bf uniformly timelike} (for the physical metric) if there exists a constant $C > 0$ such that $g(\xid(s),\xid(s)) < -C$ for almost all $s \in (s_0,s_1)$.
\end{definition}

The following result follows easily from this definition. 

\begin{lemma}
\label{lem:utc}
Let $\xi \in W^{1,\infty}(s_0,s_1)$ be a uniformly timelike curve defined on a compact intervall $[s_0,s_1]$. Then, there exists a constant $D>0$ depending only on the sup-norms of $\xid$, $U \circ \xi$, and $\eta \circ \xi$ such that 
\be
\label{ineq:utc}
|\xid^R|\ge |\xid^\theta|+D, \qquad s \in (s_0,s_1). 
\ee 
\end{lemma}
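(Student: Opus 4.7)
The plan is to unpack the definition of uniform timelikeness in the areal coordinate expression \eqref{300} for $g$ and exploit the block-diagonal structure of the metric: since the symmetry orbits are spacelike, the two pieces $e^{2U}(\xid^x + A\xid^y)^2$ and $e^{-2U}R^2 (\xid^y)^2$ are pointwise nonnegative, so dropping them can only make $g(\xid,\xid)$ larger. Therefore
\[
e^{2(\eta-U)}\bigl(-(\xid^R)^2 + (\xid^\theta)^2\bigr) \;\le\; g(\xid,\xid) \;<\; -C
\]
almost everywhere on $(s_0,s_1)$.

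Next, I would use the sup-norm bounds on $U\circ\xi$ and $\eta\circ\xi$ (which are assumed finite because $\xi$ is defined on a compact interval and these functions are bounded along the curve) to conclude that $e^{2(U-\eta)}$ admits a uniform positive lower bound along $\xi$. Rearranging the previous inequality, this yields
\[
(\xid^R)^2 - (\xid^\theta)^2 \;\ge\; C\, e^{2(U-\eta)\circ\xi} \;\ge\; C'
\]
for some $C'>0$ depending only on $C$ and the sup-norms of $U\circ\xi,\eta\circ\xi$. In particular $|\xid^R|>|\xid^\theta|$ a.e., so the factorisation
\[
(|\xid^R|-|\xid^\theta|)(|\xid^R|+|\xid^\theta|) \;\ge\; C'
\]
is meaningful.

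Finally, the $W^{1,\infty}$ hypothesis gives a uniform bound $|\xid^R|+|\xid^\theta|\le 2M$ where $M$ bounds $\|\xid\|_{L^\infty}$, and dividing yields
\[
|\xid^R| - |\xid^\theta| \;\ge\; \frac{C'}{2M} \;=:\; D,
\]
a constant depending only on $C$, $\|\xid\|_{L^\infty}$, $\|U\circ\xi\|_{L^\infty}$ and $\|\eta\circ\xi\|_{L^\infty}$, which is exactly the bound \eqref{ineq:utc}.

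No real obstacle is anticipated: the argument is essentially algebraic manipulation of the defining inequality, and the only point requiring a moment of care is ensuring that the dependence of $D$ is expressed in terms of the quantities listed in the statement (in particular, no dependence on $A$ or on $R$ itself, which is achieved precisely because the discarded orbital terms have the correct sign and the conformal factor $e^{2(\eta-U)}$ is handled using the sup-norms of $U\circ\xi$ and $\eta\circ\xi$ alone).
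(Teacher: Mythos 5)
Your proposal is correct and follows essentially the same route as the paper: drop the nonnegative orbit terms to get $(\xid^R)^2 \ge C e^{2(U-\eta)} + (\xid^\theta)^2$, bound the conformal factor below via the sup-norms of $U\circ\xi$ and $\eta\circ\xi$, and then convert the gap between squares into an additive gap using the $L^\infty$ bound on $\xid$. The only cosmetic difference is the last step, where you factor the difference of squares and divide by $|\xid^R|+|\xid^\theta|\le 2M$, whereas the paper solves a small quadratic for an explicit $\epsilon$; both give a constant $D$ with the stated dependence.
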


\begin{proof} Indeed, from our definition and the expression of the metric, we have $|\xid^R|^2 \ge Ce^{-2\eta+2 U} + |\xid^\theta|^2$  for some $C>0$. Using the boundedness of $\eta$ and $U$ (which follows from the continuity of $\xi$ and compactness), we thus have $|\xid^R|^2 \ge C_1^2 + |\xid^\theta|^2$ for some $C_1>0$. Thus, we obtain the desired result provided that $\epsilon > 0$ is 
sufficiently 
small (depending only on the sup-norms of $\xid$, $U \circ \xi$, and $\eta \circ \xi$), so that 
$$
C_1^2 + |\xid^\theta|^2(s) \ge \left(\epsilon C_1+|\xid^\theta| (s)\right)^2$$ for all $s \in [s_0,s_1]$. 
One then easily checks that 
$$
\epsilon= -  \frac{|| \xid^\theta ||_{L^\infty(s_0,s_1)}}{C_1} + \Bigg( \frac{\| \xid^\theta \|^2_{L^\infty(s_0,s_1)}}{C_1^2}+1 \Bigg)^{1/2}
$$ works. 
\end{proof}

The following definition is motivated by the above estimate and concerns curves whose projection to the quotient spacetime $\mathcal{Q}=[R_0,\infty) \times S^1$, endowed with the (conformally equivalent) flat metric $g_\mathcal{Q}=-dR^2+d\theta^2$, are uniformly timelike. This notion will be useful in the proof of the completeness of geodesics.

\begin{definition}
A timelike curve $\xi:(s_0,s_1)$ with $W^{1,\infty}$ regularity is said to be {\bf uniformly timelike for the flat quotient metric}
 if the estimate \eqref{ineq:utc} holds globally on $(s_0,s_1)$ for some constant $D>0$.  
\end{definition}

  The proof of the following lemma is completely similar to that of Lemma \ref{lem:utc} and is omitted.

\begin{lemma}
\label{lem:qut}
Let $\xi: (s_0,s_1)$, $-\infty<s_0 <s_1 < +\infty$, be a timelike curve with $W^{1,\infty}$ regularity such that the norm of the tangent vector $g(\xid,\xid)$ is a constant $-C <0$. 
Then, the curve  $\xi$ is uniformly timelike for the flat quotient metric curve for a constant $D > 0$ arising in \eqref{ineq:utc}, which can be chosen to depend only on $\| \xid, \eta \circ \xi, U \circ \xi   \|_{L^\infty(s_0,s_1)}$ (which are finite by assumption) 
and the areal times $R(\xi(s_1)$ and $R(\xi(s_0))$.
(The latter determine the compact interval of time on which the metric functions are uniformly bounded.) 
\end{lemma}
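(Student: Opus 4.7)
The plan is to mirror the argument of Lemma~\ref{lem:utc} verbatim, now exploiting the stronger hypothesis that the Lorentzian norm $g(\xid,\xid)$ equals the negative constant $-C$ (rather than just being controlled from above by $-C$). First I would expand $g(\xid,\xid)=-C$ using the form \eqref{300} of the metric: since the two ``$y$-block'' terms $e^{2U}(\xid^x+A\,\xid^y)^2$ and $e^{-2U}R^2(\xid^y)^2$ are manifestly non-negative, dropping them leaves the pointwise inequality
$$
e^{2(\eta-U)\circ\xi}\bigl((\xid^R)^2-(\xid^\theta)^2\bigr)\ \ge\ C\qquad\text{a.e. on }(s_0,s_1),
$$
and hence $(\xid^R)^2-(\xid^\theta)^2\ \ge\ C\, e^{2(U-\eta)\circ\xi}$.

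Next I would invoke the $L^\infty$ assumptions on $U\circ\xi$ and $\eta\circ\xi$ to bound the exponential factor from below by a strictly positive constant, obtaining
$$
(\xid^R)^2\ \ge\ C_1^2 + (\xid^\theta)^2\qquad\text{a.e. on }(s_0,s_1),
$$
with $C_1>0$ depending only on $C$ and on $\|U\circ\xi\|_{L^\infty}$, $\|\eta\circ\xi\|_{L^\infty}$. From here the last step is the same elementary algebraic trick as in Lemma~\ref{lem:utc}: set
$$
\epsilon\ :=\ -\frac{\|\xid^\theta\|_{L^\infty(s_0,s_1)}}{C_1}+\Bigl(\frac{\|\xid^\theta\|_{L^\infty(s_0,s_1)}^2}{C_1^2}+1\Bigr)^{1/2}>0
$$
and $D:=\epsilon C_1$; then $C_1^2+(\xid^\theta)^2\ \ge\ (\epsilon C_1+|\xid^\theta|)^2$, whence $|\xid^R|\ge|\xid^\theta|+D$ a.e. on $(s_0,s_1)$, which is precisely \eqref{ineq:utc}.

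To justify the parenthetical claim that $D$ can be taken to depend only on $\|\xid\|_{L^\infty}$ and on the areal times $R(\xi(s_0))$, $R(\xi(s_1))$, I would argue that the $L^\infty$ norms of $U\circ\xi$ and $\eta\circ\xi$ are themselves controlled by the latter. By the regularity recalled in Section~\ref{sec:2}, $U\in C([R_0,+\infty);H^1(S^1))$ and $\eta\in C([R_0,+\infty);W^{1,1}(S^1))$, and the one-dimensional Sobolev embeddings $H^1(S^1)\hookrightarrow C^0(S^1)$ and $W^{1,1}(S^1)\hookrightarrow C^0(S^1)$ then imply that both $U$ and $\eta$ are uniformly bounded on the compact slab $[R(\xi(s_0)),R(\xi(s_1))]\times S^1$, with bounds depending only on the endpoint areal times.

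I do not anticipate any serious obstacle: the proof is essentially a direct transcription of Lemma~\ref{lem:utc}, and the only mildly delicate point is the passage from $\eta\in W^{1,1}$ to a continuous representative, which is immediate in one space dimension.
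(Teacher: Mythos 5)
Your proposal is correct and follows essentially the same route as the paper, which in fact omits the proof with the remark that it is ``completely similar'' to that of Lemma~\ref{lem:utc}; your transcription of that argument (dropping the non-negative torus-block terms from $g(\xid,\xid)=-C$, bounding $e^{2(U-\eta)\circ\xi}$ from below, and applying the same algebraic choice of $\epsilon$) is exactly what is intended. Your additional observation that the $L^\infty$ bounds on $U\circ\xi$ and $\eta\circ\xi$ follow from the areal times via the one-dimensional Sobolev embeddings is a correct and welcome justification of the parenthetical remark in the statement.
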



\subsection{Regularity along timelike curves}

Our aim now is to study the regularity of the metric coefficients along timelike curves and, in particular, to establish the existence of traces. 

\begin{proposition}[Additional regularity along timelike curves]
\label{lem:408}
Let $\xi$ be a uniformly timelike $W^{1,\infty}(s_0,s_1)$ curve. Then, the following properties hold: 
\begin{enumerate}

\item The composite function $\eta \circ \xi$ belongs to $W^{1,1}(s_0, s_1)$.

\item The functions $U\circ \xi$, $A \circ \xi$ belongs to $H^1(s_0,s_1)$.

\item Denote by $\Gamma_{\alpha \beta}^{\gamma}$ the Christoffel symbols in an adapted frame. Then, there exists functions 
in $L^1(s_0,s_1)$ denoted $\Gamma_{\alpha \beta|\xi}^{\gamma}$ such that, on one hand, 
when $g$ is smooth one has $\Gamma_{\alpha \beta|\xi}^{\gamma}=\Gamma^{\alpha \beta}_{\gamma} \circ \xi$ and, on the other hand,
 for any sequence of smooth solutions ${}^\epsilon U, {}^\epsilon A$, ${}^\epsilon \eta$ to the system \eqref{weakform1}-\eqref{weakform3} satisfying 
 ${}^\epsilon U, {}^\epsilon A \to U, A$ in $C([R_0, +\infty);H^1(S^1)) \cap C^1([R_0, +\infty);L^2(S^1))$ and ${}^\epsilon \eta \to \eta$ in $C([R_0, +\infty);W^{1,1}(S^1)) \cap C^1([R_0, +\infty);L^1(S^1))$ (as $\epsilon \to 0$), one has
$$
\|^\epsilon\Gamma^{\alpha \beta}_{\gamma} \circ \xi- \Gamma_{\alpha \beta|\xi}^{\gamma} \|_{{L^1(s_0,s_1)}} \to 0 
$$
as $\epsilon \to 0$, where ${}^\epsilon\Gamma^{\alpha \beta}_{\gamma}$ denote the Christoffel symbols of $g_\epsilon$, the metric associated to ${}^\epsilon U, {}^\epsilon A$, ${}^\epsilon \eta$. In other words, the trace of $\Gamma^{\alpha \beta}_{\gamma}$ on $\xi$ exists and $\Gamma^{\alpha \beta}_{\gamma} \circ \xi \in L^1(s_0,s_1)$.

\item One has the additional regularity
$
\Gamma^{\alpha}_{ab|\xi}\in L^2(s_0,s_1) 
$
for $\alpha=R,\theta$ and $a,b=X,Y$.
\end{enumerate}
\end{proposition}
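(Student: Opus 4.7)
The plan is to combine a non-characteristic energy flux argument for the $(U,A)$ wave maps system with a direct use of the constraint equations \eqref{weakconstraintsr} for $\eta$. I will first work with the approximating sequence $({}^\eps U, {}^\eps A, {}^\eps \eta)$ of smooth solutions provided in item (3), establish $\eps$-uniform bounds on the traces of the first derivatives of the metric coefficients along $\xi$, and then pass to the limit. The key input from the uniformly timelike hypothesis (Lemma \ref{lem:utc}) is the slope gap $|\xid^R|\ge |\xid^\theta|+D>0$, which makes the projected curve in the quotient $\Qcal$ non-characteristic for the flat $1+1$ wave operator $\partial_R^2-\partial_\theta^2$ with uniform margin.

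The technical heart of the argument is the pointwise form of the decay identity \eqref{decay1}, valid for smooth solutions of \eqref{weakform1}--\eqref{weakform2}:
\[
\partial_R \Ecal + \partial_\theta \Pscr = -\frac{2}{R}\Big(U_R^2 + \frac{e^{4U}}{4R^2}A_\theta^2\Big),
\qquad
\Pscr := -2\,U_R U_\theta - \frac{e^{4U}}{2R^2}A_R A_\theta,
\]
whose right-hand side is non-positive. Applying Green's theorem on the region of $\Qcal$ bounded below by $\{R=R_0\}$ and above by the (monotone in $R$, thanks to the slope gap) projection of $\xi$, and noting that the lateral contributions cancel by $\theta$-periodicity, I would obtain
\[
\int_{s_0}^{s_1}\big(\Ecal\,\xid^R - \Pscr\,\xid^\theta\big)(\xi(s))\,ds \le \Escr(R_0).
\]
Viewed as a quadratic form in $(U_R, U_\theta)$, the integrand on the left has eigenvalues $\xid^R\pm \xid^\theta \ge D$, and similarly for the $A$-contribution after extracting the manifestly bounded weight $e^{4U}/(4R^2)$; hence it dominates $D\,\Ecal(\xi(s))$. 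This gives an $\eps$-uniform $L^2$ bound along $\xi$ for $U_R, U_\theta, A_R, A_\theta$, using that $e^{4U}/R^2$ is uniformly bounded on a compact neighborhood of $\xi$ (via the one-dimensional Sobolev embedding $H^1(S^1)\hookrightarrow C^0(S^1)$ applied slice-wise, combined with the continuity of $R\mapsto U(R,\cdot)$ in $H^1(S^1)$). Remarkably, no estimate on the sources of \eqref{weakform1}--\eqref{weakform2} is required: the non-positivity of the divergence does the work, which is what makes the argument viable at the weak regularity considered here.

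The remaining items then follow. Item (2) is immediate since $\xid\in L^\infty$ and the chain-rule identity $(U\circ\xi)'=U_R\,\xid^R+U_\theta\,\xid^\theta$ shows $U\circ\xi\in H^1(s_0,s_1)$; similarly for $A\circ\xi$. For item (1), the constraints \eqref{weakconstraintsr} express $\eta_R$ and $\eta_\theta$ as quadratic combinations of derivatives of $U$ and $A$, hence in $L^1$ along $\xi$ by Cauchy--Schwarz, so $(\eta\circ\xi)'\in L^1$. At this stage $U\circ\xi, A\circ\xi, \eta\circ\xi$ are bounded and continuous by one-dimensional Sobolev embedding, so all exponentials and $R^{-p}$ weights entering the Christoffel symbols of the metric \eqref{300} in the adapted frame are bounded along $\xi$. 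The symbols $\Gamma^\alpha_{ab}$ with $\alpha\in\{R,\theta\}$, $a,b\in\{X,Y\}$ have the special structure $\Gamma^\alpha_{ab}=-\tfrac{1}{2}g^{\alpha\alpha}\partial_\alpha g_{ab}$ (the mixed block of the metric vanishes), so they are linear in $U_R, U_\theta, A_R, A_\theta$ alone; by the flux bound they lie in $L^2$ along $\xi$, proving item (4). All other Christoffel symbols are at worst quadratic in these first derivatives of $U,A$ or linear in $\eta_R, \eta_\theta$, hence belong to $L^1$ along $\xi$, proving the regularity part of item (3).

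The hard part will be the stability claim in item (3): one must show that the trace map is continuous from the approximating sequence to the Christoffel traces in $L^1(s_0,s_1)$. By linearity of the chain-rule and constraint expressions, this reduces to continuous dependence of the $L^2$-traces of $U_R, U_\theta, A_R, A_\theta$ on the sequence, which I would obtain by applying the same flux estimate to the linear wave equations satisfied by the differences ${}^\eps U - {}^{\eps'} U$ and ${}^\eps A - {}^{\eps'} A$, combined with the assumed $C([R_0,\infty); L^2(S^1))$-convergence of the time derivatives and an $L^1$-in-spacetime control of the source differences obtained by polarization from the convergences of ${}^\eps U, {}^\eps A$ in the stated function spaces.
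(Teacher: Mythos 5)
Your core mechanism --- coercivity of the wave-map energy flux through a curve that is non-characteristic with uniform margin $D$, yielding $\eps$-uniform $L^2$ traces of $U_R,U_\theta,A_R,A_\theta$ and hence $L^1$ traces of $\eta_R,\eta_\theta$ via the constraints --- is exactly the mechanism the paper uses, and your deductions of items (1), (2) and (4) from it match the paper's (in particular the observation that $\Gamma^\alpha_{ab}$ involves only first derivatives of $U,A$ with bounded prefactors is precisely how the paper gets the $L^2$ claim). There is, however, one concrete error in your setup: the projection of $\xi$ to $\Qcal$ is a single arc with $|\xid^R|>|\xid^\theta|$, so it does \emph{not} wrap around $S^1$, and the slice $\{R=R_0\}$ together with that arc does not bound a region; there are no ``lateral contributions that cancel by $\theta$-periodicity.'' To close the domain you must insert two null rays joining the endpoints of the arc to $\{R=R_0\}$, and then use that the wave-map energy flux through a characteristic is signed (it is a sum of squares $(U_R\mp U_\theta)^2+\dots$) so those boundary terms can be discarded. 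This is exactly the domain the paper works with (``a domain bounded by an initial hypersurface $R=R_0$, a characteristic hypersurface $\Hcal$ and the timelike curve $\xi$''), with the roles of time and space exchanged relative to the estimates of Sections 6.4--6.5 of \cite{LeFlochSmulevici1}. The gap is local and fixable, but as written the Green's theorem step fails.

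It is also worth noting a shortcut the paper exploits that makes your Green's theorem computation unnecessary for the a priori bound: by the constraints \eqref{weakconstraintsr}, $\frac{d}{ds}(\eta\circ\xi)=R\big(\xid^R E+\xid^\theta F\big)$ is \emph{precisely} the energy flux through $\xi$, and since $E\ge|F|$ pointwise and $\xid^R\ge|\xid^\theta|+D$ this quantity is nonnegative and dominates $R\,D\,\Ecal(\xi(s))$. Hence its $L^1$ norm telescopes to $\eta(\xi(s_1))-\eta(\xi(s_0))$, which is finite by continuity of $\eta$, giving both item (1) and the $L^2$ trace bounds of item (2) in one line, with no bulk integration at all. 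The divergence-theorem machinery (and the characteristic boundaries) is then reserved for the genuinely harder step, which you correctly identify as the stability/compactness claim in item (3): there the difference equations have sources without a sign, and one needs a Gronwall-type flux estimate on the triangular domain; your sketch of that step is plausible but is the part that would require the most care to complete.
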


\begin{proof} Let us first we assume that our functions have enough regularity so that the traces are well defined and prove uniform estimates for them. The derivatives $\eta_R, \eta_\theta$ are essentially the energy and energy flux of the wave map system, since 
$\eta_R = a \, RE$ and $\eta_\theta  = R \, F$ with 
$$
\aligned 
& E = \big(  \, U_R^2 +  \, U_\theta^2 \big) 
  + {e^{4U} \over 4R^2} \, \big(  \, A_R^2 +  \, A_\theta^2 \big),  
\\      
& F = 2 \, U_R U_\theta + {2 e^{2U} \over R^2} \, A_R A_\theta. 
\endaligned 
$$
We need to control the term
$$
\aligned
{d \over ds} \big( \eta(\xi(s)) \big) 
& = \xid^0(s) \eta_R(\xi(s)) + \xid^1(s) \, \eta_\theta(\xi(s)) 
\\
& = R \, \Big(  \, \xid^0(R)  E(\xi(s))  + \xid^1(R) F(\xi(s)) \, \Big),  
\endaligned
$$
which is nothing but the flux of the energy equation along the timelike curve under consideration. 

Note that since $\xi$ is assumed to be timelike,  $|\xid^R| \ge\xid^\theta$. Hence, the right-hand side of the previous equation is positive. As a consequence, 
$$
\int^{s_1}_{s_0}\left |{d \over ds} \big( \eta(\xi(s)) \big) \right| ds = 
\int^{s_1}_{s_0}{d \over ds} \big( \eta(\xi(s)) \big)=\eta(\xi(s_1))-\eta(\xi(s_0)$$
is uniformly bounded in view of the regularity of $\eta$. 
In the case where our metric functions do not have enough regularity, consider as in \cite{LeFlochSmulevici1}, a sequence of solutions $^\eps U$, $^\eps A$, $^\eps \eta$ converging to our rough solution. Since $^\eps U$ and $^\eps A$ satisfies wave equations in $1+1$ dimensions, we can consider a domain $\Omega$ bounded by a initial hypersurface $R=R_0$, a characteristic hypersurface $\mathcal{H}$ and the timelike curve $\xi$. 

Apart from exchanging the time and the spatial directions, one can then repeat the estimates of Sections 6.4 and 6.5 in \cite{LeFlochSmulevici1} to prove compactness of the sequence of the traces of the solutions on the curve $\xi$. There are only minor modifications to the estimates in Sections 6.4 and 6.5 therein and so we do not repeat them here. This establishes in particular the Items 1 and 2 of the lemma.

The third claim follows since the $L^1$ norm of the Christofel symbols is controled by the $W^{1,1}$ norms of $\eta$, $U$ and $A$, as can be checked directly from the expressions of $\Gamma^{\alpha \beta}_{\gamma}$ in any adapted frame. For the last claim, it suffices to check that the expressions of the corresponding Christoffel symbols only involve derivatives of $U$ and $A$ (which are controled in $L^2$) and no derivatives of $\eta$.
\end{proof}


\subsection{Definition and existence of timelike geodesics}

In view of Lemma~\ref{lem:408}, from now on and with some abuse of notation, we denote by $\Gamma^\alpha_{\beta \gamma} \circ \xi$ 
 the traces $\Gamma^\alpha_{\beta \gamma|\xi}$.
We are in a position to establish the following result.

\begin{proposition} 
\label{prop:408}
Let $(\Mcal,g)$ be a weakly regular Gowdy spacetime. 
Given any initial point $\xi_0 \in \Mcal$, timelike vector $\xi_1 \in T_{\xi_0}\Mcal$ and initial affine parameter $s_0 \in \mathbb{R}$, there exists a value $s_1 > s_0$ and a curve  $\xi: (s_0, s_1) \to \Mcal$ such that 
 $\xid \in W^{1,1}(s_0, s_1)$,  which achieves the given data at the time $s_0$
$$
\xi(s_0) = \xi_0, \qquad \xid(s_0) = \xi_1
$$
and which satisfies the geodesic equation 
\be
\label{eq:geo}
\ddot\xi^\alpha = - \big( \Gamma_{\beta\gamma}^\alpha \circ \xi \big)\, \xid^\beta \xid^\gamma,
\ee 
where $\Gamma_{\beta\gamma}^\alpha \circ \xi \in L^1(s_0, s_1)$ denotes the trace of $\Gamma_{\beta\gamma}^\alpha$ on $\xi$ as defined in Proposition \ref{lem:408}. 
\end{proposition}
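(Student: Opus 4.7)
The plan is to regularize the metric, solve the classical geodesic ODE for the smoothed metrics on a common interval, and extract a limit via Ascoli--Arzel\`a, using the trace estimates of Proposition~\ref{lem:408} as the central tool. Specifically, I would fix a sequence of smooth solutions $({}^\epsilon U, {}^\epsilon A, {}^\epsilon \eta)$ converging to $(U,A,\eta)$ in the functional spaces of Proposition~\ref{lem:408}, and denote by $g_\epsilon$ the associated smooth metrics. For each $\epsilon$, Cauchy--Lipschitz yields a unique smooth geodesic $\xi_\epsilon$ of $g_\epsilon$ with $\xi_\epsilon(s_0)=\xi_0$ and $\dot\xi_\epsilon(s_0)=\xi_1$ on a maximal interval.

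The next step is to produce uniform estimates on a common interval $[s_0,s_1]$ independent of $\epsilon$. The quantity $g_\epsilon(\dot\xi_\epsilon,\dot\xi_\epsilon)$ is conserved along $\xi_\epsilon$ and converges to $g(\xi_1,\xi_1)<0$; moreover the two Killing vector fields $\partial_x,\partial_y$ furnish conserved momenta that, together with the mass-shell relation and the uniform convergence of the metric coefficients on a compact neighborhood of $\xi_0$, control $\dot\xi_\epsilon$ in $L^\infty$. Choosing $s_1>s_0$ sufficiently close to $s_0$ (via a continuity/bootstrap argument), I ensure that (i)~$\xi_\epsilon([s_0,s_1])$ lies in a fixed compact set, (ii)~$\xi_\epsilon$ is uniformly timelike for $g_\epsilon$ with an $\epsilon$--independent constant in the sense of Lemma~\ref{lem:utc}, and (iii)~the projections of $\xi_\epsilon$ to $\Qcal$ are uniformly timelike for the flat quotient metric with an $\epsilon$--independent constant.

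The main obstacle is to obtain a uniform $L^1$ bound on the Christoffel terms $\,{}^\epsilon\Gamma^\alpha_{\beta\gamma}\circ\xi_\epsilon\,$, since Proposition~\ref{lem:408} is stated along a \emph{fixed} uniformly timelike curve whereas here the curves vary with $\epsilon$. The remedy is to revisit its proof: the key estimate is an energy--flux inequality on the domain $\Omega_\epsilon$ bounded by $\{R=R_0\}$, the timelike curve $\xi_\epsilon$ and a suitable characteristic hypersurface; the constants in this inequality depend only on the uniform-timelike constant of $\xi_\epsilon$ and the sup-norms of $({}^\epsilon U,{}^\epsilon A,{}^\epsilon\eta)$ on the compact range of $\xi_\epsilon$, both of which are uniform in $\epsilon$ by the previous step. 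Combined with the $L^\infty$ control on $\dot\xi_\epsilon$, this bounds $\ddot\xi_\epsilon$ in $L^1(s_0,s_1)$ uniformly, so that $\dot\xi_\epsilon$ is uniformly bounded in $W^{1,1}$ and equicontinuous in a suitable BV sense.

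Finally, Ascoli--Arzel\`a (applied to $\xi_\epsilon$) and Helly's selection principle (applied to $\dot\xi_\epsilon$) yield a subsequence with $\xi_\epsilon\to\xi$ in $C^0$, $\dot\xi_\epsilon\to\dot\xi$ a.e.\ and in $L^1$, and $\dot\xi\in W^{1,\infty}\cap W^{1,1}$; the limit $\xi$ inherits the uniformly timelike property. To pass to the limit in $\ddot\xi_\epsilon^\alpha=-({}^\epsilon\Gamma^\alpha_{\beta\gamma}\circ\xi_\epsilon)\dot\xi_\epsilon^\beta\dot\xi_\epsilon^\gamma$, I split
\[
{}^\epsilon\Gamma\circ\xi_\epsilon-\Gamma\circ\xi=\bigl({}^\epsilon\Gamma-\Gamma\bigr)\circ\xi_\epsilon+\bigl(\Gamma\circ\xi_\epsilon-\Gamma\circ\xi\bigr);
\]
the first term tends to zero in $L^1$ by the convergence statement in Proposition~\ref{lem:408} (applied along the uniformly timelike limit curve, after a continuous-dependence argument on the trace inequality), and the second vanishes by continuity of traces under uniform convergence of uniformly timelike curves. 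Combined with the strong convergence of $\dot\xi_\epsilon$, this identifies the right-hand side of the ODE in the sense of distributions, and yields the geodesic equation \eqref{eq:geo} with $\dot\xi\in W^{1,1}(s_0,s_1)$, as claimed.
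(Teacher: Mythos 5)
Your proposal is correct in outline but follows a genuinely different route from the paper. The paper does \emph{not} regularize the metric and pass to the limit on approximate geodesics; it works directly with the rough metric and its Christoffel traces (already constructed in Proposition~\ref{lem:408}) and runs a Picard-type iteration: the components $\dot\xi^X,\dot\xi^Y,\dot\xi^R$ are prescribed \emph{algebraically} from the conserved quantities $J_X$, $J_Y$, $-N^2$, so that only the scalar $\theta$-equation is iterated via a map $\psi$ on a ball of $W^{2,1}$ curves; invariance of the ball for small $|s_1-s_0|$ and equicontinuity of the iterates $\dot\xi_k$ both come from the uniform $L^1$ bound $\|\Gamma^\gamma_{\alpha\beta}\circ\xi\|_{L^1(s_0',s_1')}\le\delta(s_1'-s_0')$ of Lemma~\ref{lem:equi}, and Arzel\`a--Ascoli plus the continuity statement \eqref{eq:193} identify the limit as a fixed point. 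Your scheme instead solves the exact geodesic equation for each smooth approximating metric $g_\epsilon$ and takes a double limit (metric and curve simultaneously). This is viable and arguably more standard, and it has the advantage that the conservation laws hold exactly for each $\xi_\epsilon$; but it shifts the technical burden onto a point you correctly flag: the trace convergence in Proposition~\ref{lem:408} and the continuity estimate \eqref{eq:193} are each stated for one varying argument at a time (fixed curve with varying metric, or fixed rough metric with varying curve), so you must upgrade them to a joint statement $\|({}^\epsilon\Gamma\circ\xi_\epsilon)-\Gamma\circ\xi\|_{L^1}\to 0$ uniformly over the family of uniformly timelike curves; this does follow from the same energy argument in domains bounded by two timelike curves, but it is an extra step the paper's single-metric iteration avoids. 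Two small corrections: the regularity of the limit is $\xi\in W^{2,1}$ with $\dot\xi\in W^{1,1}$ (not $\dot\xi\in W^{1,\infty}$, since the Christoffel traces are only $L^1$); and rather than invoking Helly, note that the uniform modulus $\delta(\tau)\to 0$ gives genuine equicontinuity of the $\dot\xi_\epsilon$, so Arzel\`a--Ascoli yields $C^1$ convergence directly --- this is also what guarantees that the initial condition $\dot\xi(s_0)=\xi_1$ survives the limit, which mere a.e.\ convergence would not.
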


Observe that, from the proof below, it follows that the time of existence of the solution in Proposition~\ref{prop:408}
 depends only on the size of the initial vector and the norms of the metric coefficients, i.e.~the $H^1$ norm of $A, U$, the $W^{2,1}$ norm of $\eta$, etc.

We do not claim uniqueness of the solutions to the geodesic equations, which usually follows a Lipschitz estimate which is likely not hold for the Christoffel symbols under our regularity assumption. However, uniqueness is not needed to define maximal solutions to ordinary differential equations. Thus, as a corollary to the existence result of Proposition~\ref{prop:408}, there exists for every initial conditions with future pointing initial timelike vector, at least one maximal curve, defined on a interval $[s_0,s_1)$ which is a solution to the geodesic equation.  

Before giving the proof of Proposition~\ref{prop:408} and since $X,Y$ are Killing fields, we state the following result, which 
(as in the regular case) is a direct calculation, based on the fact that $X,Y$ are Killing field and the curve is a geodesic.
This observation allows us to consider just one scalar equation instead of a system of ordinary differential equations.

\begin{lemma}
For a curve $\xi$ in $W^{2,1}(s_0,s_1)$, the geodesic equation \eqref{eq:geo} is equivalent to
$$
\aligned
\frac{d}{ds}\left( g(\xid,\xid)\right)=0,  & \qquad &&  
\frac{d}{ds}\left( g(\xid,X)\right)=0,
\\
\frac{d}{ds}\left( g(\xid,Y)\right)=0, 
& \qquad &&
\ddot\xi^\theta + \big( \Gamma_{\beta\gamma}^\theta \circ \xi \big)\, \xid^\beta \xid^\gamma=0.
\endaligned
$$
\end{lemma}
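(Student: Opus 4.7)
The plan is to establish the two implications separately by introducing the auxiliary function
$$
Z^\alpha(s) := \ddot\xi^\alpha + (\Gamma_{\beta\gamma}^\alpha \circ \xi)\,\dot\xi^\beta\dot\xi^\gamma,
$$
so that \eqref{eq:geo} reads precisely $Z^\alpha \equiv 0$. Both implications rest on the two classical identities
$$
\frac{d}{ds}\,g(\dot\xi, X) = g_{\alpha x}\, Z^\alpha,
\qquad
\frac{d}{ds}\,g(\dot\xi, \dot\xi) = 2\, g_{\alpha\beta}\,\dot\xi^\beta Z^\alpha,
$$
together with the analogue for $Y$. The first identity is obtained by expanding $\frac{d}{ds}(g_{\alpha x}\dot\xi^\alpha) = \partial_\mu g_{\alpha x}\dot\xi^\mu\dot\xi^\alpha + g_{\alpha x}\ddot\xi^\alpha$, substituting $g_{\alpha x}\ddot\xi^\alpha = g_{\alpha x}Z^\alpha - \Gamma_{\beta\gamma, x}\dot\xi^\beta\dot\xi^\gamma$, and using the Killing condition $\partial_x g_{\beta\gamma} = 0$, which collapses $\Gamma_{\beta\gamma, x}\dot\xi^\beta\dot\xi^\gamma$ into $\partial_\beta g_{\gamma x}\dot\xi^\beta\dot\xi^\gamma$. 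The second identity arises from the triple symmetry of $\dot\xi^\alpha\dot\xi^\beta\dot\xi^\mu$, which collapses the three terms defining $\Gamma_{\alpha\beta,\mu}$ into a single $\partial_\mu g_{\alpha\beta}$. The forward implication is then immediate from $Z \equiv 0$.

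For the reverse implication, the three conservation laws together with $Z^\theta = 0$ give the four scalar linear equations
$$
g_{\alpha x}Z^\alpha = 0,
\quad g_{\alpha y}Z^\alpha = 0,
\quad g_{\alpha\beta}\dot\xi^\beta Z^\alpha = 0,
\quad Z^\theta = 0
$$
for the four components of $Z$. In the Gowdy metric \eqref{300} the off-diagonal coupling between the $(R,\theta)$ and $(x,y)$ blocks vanishes, and the $(x,y)$-block
$$
\begin{pmatrix} e^{2U} & A e^{2U} \\ A e^{2U} & e^{2U}A^2 + e^{-2U}R^2 \end{pmatrix}
$$
has determinant $R^2 > 0$; hence the first two equations form a non-degenerate linear system in $(Z^x, Z^y)$ and force $Z^x = Z^y = 0$. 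Combined with $Z^\theta = 0$, the third equation collapses to $-e^{2(\eta - U)}\dot\xi^R Z^R = 0$, and the timelike character of $\dot\xi$ (the $(R,\theta)$-block being conformal to Minkowski, as already exploited in Lemma~\ref{lem:utc}) forces $|\dot\xi^R| > |\dot\xi^\theta| \geq 0$, in particular $\dot\xi^R \neq 0$, so that $Z^R = 0$ and $Z \equiv 0$, concluding the proof.

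The argument has no deep obstacle; the main point requiring care is the validity of the chain-rule manipulations above at the given regularity. This is guaranteed by Proposition~\ref{lem:408}: the traces $\eta\circ\xi$, $U\circ\xi$, $A\circ\xi$, and hence all metric coefficients $g_{\alpha\beta}\circ\xi$ along $\xi$, have at least $W^{1,1}$ regularity, the Christoffel traces $\Gamma^\alpha_{\beta\gamma}\circ\xi$ lie in $L^1(s_0, s_1)$, and $\dot\xi \in W^{1,1}$ by the $W^{2,1}$ assumption, so every product appearing in the two identities above belongs to $L^1(s_0,s_1)$ and the fundamental theorem of calculus applies to render $g(\dot\xi, X)$, $g(\dot\xi, Y)$, and $g(\dot\xi, \dot\xi)$ absolutely continuous in $s$.
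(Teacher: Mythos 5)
Your proof is correct and is precisely the ``direct calculation'' that the paper invokes without writing out: the two identities $\frac{d}{ds}g(\dot\xi,X)=g_{\alpha x}Z^\alpha$ and $\frac{d}{ds}g(\dot\xi,\dot\xi)=2g_{\alpha\beta}\dot\xi^\beta Z^\alpha$ reduce the equivalence to solving a linear system for $Z$, which you settle via the block structure of the metric (determinant $R^2$ of the $(x,y)$ block) and the non-vanishing of $\dot\xi^R$. The one point worth making explicit is that the reverse implication genuinely needs the curve to be timelike (so that $|\dot\xi^R|>|\dot\xi^\theta|\ge 0$), a hypothesis left implicit in the lemma's statement but guaranteed in the paper's setting, since the Christoffel traces of Proposition~\ref{lem:408} are only defined along uniformly timelike curves.
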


From now on, it will be convenient to use the variable 
\be
\label{eqmu}
\mu := \eta-U+(1/4) \ln R. 
\ee
The following lemma will be useful in order to establish Proposition~\ref{prop:408}.
 
\begin{lemma}[$L^1$ norm of Christoffel symbols along timelike curve]\label{lem:equi}
\label{lem:409}
Let $\xi \in W^{1,\infty} (s_0, s_1)$ be a uniformly timelike curve and consider the traces of the Christoffel coefficients along this curve, that is, 
$\Gamma_{\alpha\beta}^\gamma \circ \xi$ as in Proposition \ref{lem:408}.
Then, there exists a function 
\be\label{eq:194}
\delta(s_0-s_1) \geq \| \Gamma_{\alpha\beta}^\gamma \circ \xi \|_{L^1(s_0, s_1)}  
\ee
which depends only on $D$ (introduced in \eqref{ineq:utc}), $R(\xi(s_0))$, and $R(\xi(s_1))$ uniformly with respect to $\xi$, 
and satisfies 
$$
\delta(\tau)\to 0 \text{ as } \tau \to 0.
$$ 
Moreover, for every sequence $\xi_\eps$ approaching $\xi$ in $W^{1,\infty} (s_0, s_1)$ one has 
\be
\label{eq:193}
\| (\Gamma_{\alpha\beta}^\gamma \circ \xi_\eps) \xid^\alpha_\eps - (\Gamma_{\alpha\beta}^\gamma \circ \xi) \xid^\alpha \|_{L^1(s_0, s_1)} \to 0
\quad \text{ when } \quad 
\| \xi_\eps - \xi \|_{W^{1,\infty}(s_0, s_1)} \to 0. 
\ee
\end{lemma}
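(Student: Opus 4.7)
My plan is to bound $\|\Gamma_{\alpha\beta}^\gamma \circ \xi\|_{L^1(s_0, s_1)}$ term by term after writing each Christoffel symbol as a zero-order polynomial in $\eta, U, A, R$ multiplied by one of the first-order quantities $\eta_R, \eta_\theta, U_R, U_\theta, A_R, A_\theta$. By Lemma~\ref{lem:qut}, $\xi$ is also uniformly timelike for the flat quotient metric with a constant $D$ depending only on the quantities listed in the statement, so $\xi$ can be viewed as a graph $\theta = \theta(R)$ with $|\theta'|$ bounded strictly below $1$; one can then apply the energy-flux identity for the wave map system \eqref{weakform1}--\eqref{weakform2} on a lens-shaped domain bounded by $\xi$ and an arc of $\{R = R(\xi(s_0))\}$ lying in its past.

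For terms involving $U_R, U_\theta, A_R, A_\theta$, the flux identity (obtained by multiplying \eqref{weakform1} by $U_R$ and \eqref{weakform2} by $e^{4U} R^{-1} A_R$) produces a boundary flux on $\xi$ that dominates $D \int_{s_0}^{s_1} R \bigl(U_R^2 + U_\theta^2 + \tfrac{e^{4U}}{4R^2}(A_R^2+A_\theta^2)\bigr)(\xi(s))\,ds$, and is itself controlled by the initial-slice energy on an arc of $S^1$ of length at most $R(\xi(s_1)) - R(\xi(s_0))$. Cauchy-Schwarz then converts this $L^2$ trace bound into an $L^1$ bound of order $\sqrt{\tau}$. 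For the $\eta_R, \eta_\theta$ terms I would substitute the constraints $\eta_R = RE$, $\eta_\theta = RF$ to rewrite them as quadratic expressions in those same derivatives, so that $\int_{s_0}^{s_1} (|\eta_R| + |\eta_\theta|)(\xi(s))\,ds$ is controlled by $R(\xi(s_1))$ times the squared $L^2$ traces just estimated. Uniform absolute continuity of the integrable density $E(R(\xi(s_0)), \cdot)$ on the compact torus $S^1$ then yields a modulus $\delta(\tau)$, depending only on $D$, $R(\xi(s_0))$, $R(\xi(s_1))$, and the fixed initial data, with $\delta(\tau) \to 0$ as $\tau \to 0$.

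For the convergence statement \eqref{eq:193}, I would combine the uniformity of the preceding bound along a $W^{1,\infty}$-neighborhood of $\xi$ (with a uniform $D$, by continuity) with the trace-stability already established in Proposition~\ref{lem:408}(iii) for smooth approximations of the spacetime. The uniform equi-integrability of the Christoffel traces, together with pointwise-a.e.\ convergence of the traces and the $L^\infty$ convergence $\dot\xi_\eps \to \dot\xi$, then yield $L^1$ convergence of the product $(\Gamma_{\alpha\beta}^\gamma \circ \xi_\eps)\dot\xi_\eps^\alpha$ to $(\Gamma_{\alpha\beta}^\gamma \circ \xi)\dot\xi^\alpha$ via a Vitali-type argument.

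The main obstacle I anticipate is the uniformity in $\xi$ of the decay $\delta(\tau) \to 0$ for the $\eta$-derivative contribution, which cannot be obtained from an $L^2$ trace. The mechanism that makes it work is pushing the estimate back to the fixed initial slice, where $E$ is a single integrable density on $S^1$ whose integrals over arcs of vanishing length are equi-small; the constant $D$ enters only as the coercivity constant relating the flux on $\xi$ to the interior $L^2$ trace norms, ensuring that the whole chain of estimates is genuinely uniform in $\xi$.
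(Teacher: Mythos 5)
Your argument for the main bound \eqref{eq:194} is correct, but it takes a genuinely different route from the paper's. You integrate the local energy identity for the wave-map system over a lens-shaped domain bounded by $\xi$, an ingoing characteristic, and an arc of the initial slice; coercivity of the flux through the uniformly timelike boundary (energy density dominates flux, $|F|\le E$, combined with $|\xid^R|\ge|\xid^\theta|+D$) then controls the $L^2$ traces of $U_R,U_\theta,A_R,A_\theta$ and, after substituting the constraints $\eta_R=RE$, $\eta_\theta=RF$, the $L^1$ traces of the $\eta$-derivatives, all by the initial energy on an arc of length $O(\tau)$; equi-integrability of the fixed density $E(R(\xi(s_0)),\cdot)$ gives the uniform modulus $\delta(\tau)\to0$. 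The paper instead works entirely pointwise along the curve: using the constraints it shows, e.g., $|\eta_\theta-U_\theta|\le \frac{d}{dR}\mu$ with $\mu=\eta-U+\tfrac14\ln R$, and the uniform timelike condition upgrades this to $|\Gamma^\theta_{\theta\theta}(\xi(s))|\le C\,\frac{d}{ds}(\mu\circ\xi)$ with $C=C(D)$; since $\mu\circ\xi$ is monotone, the $L^1$ norm telescopes to $C\big(\mu(\xi(s_1))-\mu(\xi(s_0))\big)$, which is controlled by the modulus of continuity of the continuous function $\mu$ on the relevant compact set. Both mechanisms ultimately exploit the same fact (timelike curves are non-characteristic and the energy density dominates the flux), but the paper's version avoids any domain integration and yields the smallness of $\delta(\tau)$ directly from continuity of $\mu$, whereas yours produces a more explicit, quantitative modulus $\delta(\tau)\sim\sqrt{\tau}+\omega_{E}(\tau)$ at the cost of rerunning the characteristic-domain energy estimate. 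Your decomposition of the Christoffel symbols into bounded zero-order factors times first derivatives is adequate for all the symbols appearing here.

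There is, however, a soft spot in your treatment of \eqref{eq:193}: you invoke ``pointwise a.e.\ convergence of the traces'' along $\xi_\eps$ to those along $\xi$, but under the present regularity the traces are only defined as $L^1$ limits (Proposition~\ref{lem:408}), and no pointwise representative is available for which such convergence could be asserted. The correct mechanism, and the one the paper uses, is again an energy estimate, this time in the thin domain bounded by the two timelike curves $\xi_\eps$, $\xi$ and two short spacelike caps: the difference of the boundary fluxes is controlled by the bulk and cap contributions, which vanish as $\|\xi_\eps-\xi\|_{W^{1,\infty}}\to0$, giving $L^1$ convergence of the trace differences directly. Your equi-integrability plus Vitali framework then closes the argument once ``pointwise a.e.\ convergence'' is replaced by this $L^1$ trace stability (or by a diagonal argument through the smooth approximations of the metric, using that the convergence in Proposition~\ref{lem:408} is uniform over curves with a common constant $D$).
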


\begin{proof} Recall first the expression
$
\Gamma^\theta_{\theta \theta}=\frac{1}{2}g^{\theta\theta}g_{\theta\theta, \theta}=\eta_\theta-U_\theta.
$
On the other hand, from the Einstein equations, we have
$$
\aligned
\eta_\theta-U_\theta&=
R \left( 2 \, U_R U_\theta + {e^{4U} \over 2R^2} \, A_R A_\theta\right) -U_\theta 
\\
&= R \Bigg( 2 \big( U_R- 1/(2R) \big) U_\theta + {e^{4U} \over 2R^2} \, A_R A_\theta\Bigg) 
\endaligned
$$
and, thus, 
$$
\aligned
\left|\eta_\theta-U_\theta\right|
&\le R \left( \, \big( U_R- 1/(2R) \big)^2 +  \, U_\theta^2  
  + {e^{4U} \over 4R^2} \, \big( \, A_R^2 +  \, A_\theta^2 \big)\right)
 \\
&= \frac{d}{dR}  \left( \eta-U+ (1/4) \ln R \right). 
\endaligned
$$

Observe that, by the uniform timelike property of $\xi$, there exists a constant $C>0$ (depending only on the uniform constant $D>0$ as in \eqref{ineq:utc}) such that
\begin{equation}
\label{ineq:dmubound}
|\eta_\theta(\xi(s))-U_\theta(\xi(s))| 
\le C \frac{d}{ds}\left( \left( \eta-U+ (1/4) \ln R \right)(\xi(s))\right)
= C \frac{d}{ds}\left( \mu(\xi(s))\right).
\end{equation}
Hence, it follows that the $L^1(s_0,s_1)$ norm of $\Gamma^\theta_{\theta \theta}(\xi(s))$ can be uniformly bounded by a constant depending only on $D>0$ and the $L^\infty$ norms of $U$ and $\eta$ in $[R(\xi(s_0)),R(\xi(s_1))] \times S^1$. The proof of \eqref{eq:194} for the other Christoffel symbols is similar.

Furthermore, the estimates \eqref{eq:193} follows from a standard energy eargument in a domain bounded by two timelike curves $\xi_1$, $\xi_2$ and two spacelike hypersurfaces.
\end{proof}

\begin{proof}[Proof of Proposition~\ref{prop:408}] 
Let $\xi_0$, $\xi_1$ be initial conditions as in the statement of the proposition, and consider the constants
$$
J_X := g(\xi_1,X),
\qquad
J_Y := g(\xi_1,Y),
\qquad 
-N^2 := g(\xi_1,\xi_1).
$$
Denote by $^{\mathcal{T}}g$ the metric induced by $g$ on each 2-torus associated to a fixed $(R,\theta)$.
Given any point $p \in \Mcal$, we denote by $(\dot \Xi^X(p),\dot \Xi^Y(p))$ the unique (vectorial) solution to the algebraic system
$$
\aligned
& ^{\mathcal{T}}g\left( \left( \dot \Xi^X(p), \dot\Xi^Y(p) \right) , X \right) = \JX,
\qquad 
&&^{\mathcal{T}}g\left(\left(\dot \Xi^X(p),\dot \Xi^Y(p) \right),Y\right) = \JY,
\\
& \sgn(\dot\Xi^X(p))= \sgn(\xi_1^X),
\qquad && \sgn(\dot\Xi^Y(p))= \sgn(\xi_1^Y). 
\endaligned
$$
Given $\xid^\theta \in \mathbb{R}$ and a point $p \in \Mcal$, let $\dot \Xi^R( \xid^\theta, p)$ be such that with $\xid=\left( \dot \Xi^R( \xid^\theta, p),  \xid^\theta, \xid^X(p), \xid^Y(p) \right)$,
$$
g(\xid, \xid) = -N^2, 
\qquad 
\sgn(\dot\Xi^R) = \sgn(\xi_1^R). 
$$

Finally, given $\xid \in L^1(s_0,s_1)$, we associate the curve
$
\xi :=\xi_0+ \int_{s_0}^s \xid(s) \, dt.
$
Given $(s_0,s_1)$, we consider now the following mapping defined on the set of $W^{2,1}$ curves: 
$\psi: W^{2,1}(s_0,s_1) \ni \psi \to \psi(\xi) \in W^{2,1}(s_0,s_1)$
such that 
$$ 
\dot \psi(\xi)^X(s)=\dot\Xi^X(\xi(s)), 
\qquad 
\dot \psi(\xi)^Y(s)=\dot\Xi^Y(\xi(s)), \qquad \dot \psi(\xi)^R(s)=\dot\Xi^R(\xi(s)),
$$
and 
$$
\dot \psi(\xi)^\theta(s) := \xi_1^\theta - \int_{s_0}^s \big( \Gamma_{\alpha\beta}^\theta \circ \xi \big)\, \xid^\alpha \xid^\beta \, dt, 
\quad 
\psi(\xi^\alpha)(s) := \xi_0^\alpha + \int_{s_0}^s \dot \psi(\xi^\theta)(s) \, dt. 
$$
Denote by $B(s_0,s_1)$ the ball of radius $1$ about the curve $\xi_0+\xi_1 (s-s_0)$ with respect to the $W^{2,1}(s_0,s_1)$ norm. Let $\widetilde{B}(s_0,s_1)$ be the intersection of $B(s_0,s_1)$ with the set of $W^{2,1}(s_0,s_1)$ curves $\xi$ such that $g(\xid, \xid)=-N^2$. Proposition~\ref{prop:408} then follows provided we can establish the following result. 

\vskip.15cm 

\noindent{\bf Claim.} 
There exists some $\epsilon > 0$ such that if $|s_1-s_0| \le \epsilon$, $\psi(\widetilde{B}(s_0,s_1)) \subset \widetilde{B}(s_0,s_1)$. Moreover, there exists a sequence $(\xi_k)_{k\in \mathbb{N}}$ of curves in $\widetilde{B}(s_0,s_1)$ which converges in $C^1(s_0,s_1)$ as $k \to +\infty$ to a curve $\xi \in W^{2,1}(s_0,s_1)$ which is a fixed point of $\psi$, i.e. a solution to the geodesic equation \eqref{eq:geo}. 

\vskip.15cm 

Note first that from the uniform bounds on the tangent vectors of the curves lying in $\widetilde{B}(s_0,s_1)$, the quantity $\sup\big\{ R(\xi(s)) : \xi \in \widetilde{B}(s_0,s_1) \big\}$ is finite. Hence, all the curves stay in a compact region of $\Mcal$. As a consequence, since $U$, $A$, $\eta$ are continuous, they are uniformly bounded along any of the curves lying in $\widetilde{B}(s_0,s_1)$ (with the bounds independent of the curve considered.)
From the condition $g(\xid, \xid)=-N^2$ and the sup norm bound on $|\xid|$, it then follows that all curves in $\widetilde{B}(s_0,s_1)$ are uniformly timelike and that the same constant $D$ in Lemmas \ref{lem:qut} and \ref{lem:equi} can be taken for all curves.
For $\xi \in \widetilde{B}(s_0,s_1)$ and using $|| \xid - \xi_1 || < 1$, we have 
\be
 \label{eq:cres}
\int_{s_0}^{s_1}\left|\frac{d^2}{ds^2}\psi(\xi)^\theta(s)\right|ds \le C || \Gamma_{\alpha \beta}^{\theta} \xid^{\alpha} \xi^{\beta} ||_{L^1(s_0,s_1)} \le C \delta,
\ee
for any given $\delta > 0$, provided $|s_1-s_0|$ is chosen to be sufficiently small.  It follows that for $\delta > 0$ sufficiently small, we can ensure that $\psi(\widetilde{B}(s_0,s_1)) \subset \widetilde{B}(s_0,s_1)$. Consider now a sequence of curves $\xi_k=\psi^k(\xi)$ where $\xi$ is any curve in $\widetilde{B}(s_0,s_1)$. In view of Lemma \ref{lem:equi}, an estimate similar to \eqref{eq:cres} holds when $s_0$ et $s_1$ are replaced by any $s_0'<s_1'$ with $s_0',s_1' \in [s_0,s_1]$. As a consequence, there exists a function $\delta(s)$ with $\delta(s) \rightarrow 0$ as $s \rightarrow 0$ such that

$$
\int_{s_0'}^{s_1'} |\ddot \xi_k(s)| ds \le \delta_k(s_1'-s_0'). 
$$
Moreover, the function $\delta_k(s)$ can be expressed as algebraic function of $R \circ \xi_k$, $U \circ \xi_k$, and $\eta \circ \xi_k$ (recalling \ref{ineq:dmubound}) which are all $C^1$ functions, with $C^1$ norm uniformly bounded with respect to $k$. Thus, one can replace $\delta_k(s)$ by a uniformly continous function on $[s_0,s_1]$ which is independent of $k$.
Equicontinuity of the sequence of $\dot \xi_k$ follows. Since we have already established uniform boundedness, an application of the Arzela-Ascoli theorem gives us the existence of a converging subsequence (in $C^1$). That the limit is solution to the geodesic equation is then a consequence of \eqref{eq:193} of Lemma \ref{lem:equi}.
\end{proof} 


\section{Future geodesic completeness of Gowdy spacetimes}  \label{se:fgcgs}

\subsection{$L^\infty$ estimates for $U$, $Ae^{2U}$ and $\rho$} \label{se:lie}

As in in the regular case, given a maximal solution to the geodesic equation, we say that the geodesic is future complete if the solution is global to the future, i.e. the interval of definition of the curve is of the type $[s_0,+\infty)$.
Let now $\xi$ be a future directed maximal solution defined on an interval $[s_0,s_1)$ to the geodesic equation as constructed in the previous section. We will denote by $J_X$ and $J_Y$ the conserved angular momenta $g(\xid,X)$ and $g(\xid,Y)$ respectively and by $K >0$ the magnitude of $\xid$, i.~e.~$-K^2=g(\xid,\xid)$.
 
The following estimates are standard in view of the avalable energy decay estimate. 

\begin{lemma} \label{lem:reu}
There exists a constant $C$ depending only on the norm of initial data of the solution such that, for all $R \ge R_0$ and uniformly in $\theta \in S^1$,
$$
\aligned
|U|(R,\theta) &\le C R^{1/2},
  \\
|A e^{2U}|(R,\theta)  &\le C e^{C R^{1/2}}.
\endaligned
$$
\end{lemma}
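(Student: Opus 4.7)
The whole argument rests on the decay estimate $\Escr(R)\le K/R$ of Proposition \ref{energ3}, together with the obvious identities
\[
\| U_R \|_{L^2(S^1)}^2 + \| U_\theta \|_{L^2(S^1)}^2 + \frac{1}{4R^2}\big\| e^{2U}A_R \big\|_{L^2(S^1)}^2 + \frac{1}{4R^2}\big\| e^{2U}A_\theta \big\|_{L^2(S^1)}^2 \;\le\; \Escr(R) \;\le\; K/R.
\]
Since $U,A\in C^0([R_0,\infty);H^1(S^1))$ and $H^1(S^1)\hookrightarrow C^0(S^1)$, pointwise values in $\theta$ are meaningful, and $\langle U\rangle(R)$, $\langle Ae^{2U}\rangle(R)$ are absolutely continuous functions of $R$.

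For the bound on $U$, I decompose $U=\langle U\rangle+(U-\langle U\rangle)$ and control each piece separately. The oscillation is handled by Cauchy--Schwarz on the circle: integrating $U_\theta$ from a point where $U$ equals its mean gives $|U(R,\theta)-\langle U\rangle(R)|\le (2\pi)^{1/2}\|U_\theta\|_{L^2_\theta}\le C R^{-1/2}$. The mean is controlled by differentiating in $R$: $|\langle U_R\rangle(R)|\le (2\pi)^{-1/2}\|U_R\|_{L^2_\theta}\le C R^{-1/2}$, whose integral from $R_0$ to $R$ is bounded by $CR^{1/2}$. Combining, $|U(R,\theta)|\le CR^{1/2}$.

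The bound on $h:=Ae^{2U}$ is more delicate because the growth of $U$ enters multiplicatively. I compute
\[
h_R = A_R\,e^{2U} + 2hU_R, \qquad h_\theta = A_\theta\,e^{2U} + 2hU_\theta,
\]
and set $M(R):=\|h(R,\cdot)\|_{L^\infty_\theta}$. Using $\|e^{2U}A_\theta\|_{L^2_\theta}\le CR^{1/2}$ and $\|U_\theta\|_{L^2_\theta}\le CR^{-1/2}$,
\[
|h(R,\theta)-\langle h\rangle(R)|\;\le\;\|h_\theta\|_{L^1_\theta}\;\le\;CR^{1/2}+CM(R)R^{-1/2}.
\]
For $R$ large enough, the $CM(R)R^{-1/2}$ term can be absorbed into the left-hand side, yielding
\[
M(R)\;\le\;2|\langle h\rangle|(R)+CR^{1/2}.
\]
A parallel computation for $\frac{d}{dR}\langle h\rangle=\langle A_Re^{2U}\rangle+2\langle hU_R\rangle$ gives
\[
\Big|\tfrac{d}{dR}\langle h\rangle\Big|\;\le\;CR^{1/2}+CM(R)R^{-1/2}\;\le\;CR^{1/2}+CR^{-1/2}|\langle h\rangle|(R),
\]
after reinserting the previous estimate on $M$. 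Applying Gronwall with integrating factor $\exp(-2CR^{1/2})$, the right-hand side $CR^{1/2}e^{-2CR^{1/2}}$ is integrable, hence $|\langle h\rangle|(R)\le Ce^{CR^{1/2}}$, and then $M(R)\le Ce^{CR^{1/2}}$. The main conceptual point is that the $1/R$ decay of $\Escr$ gives $U_R,U_\theta$ a rate of order $R^{-1/2}$ in $L^2_\theta$; integrated in $R$ this produces the $R^{1/2}$ growth for $U$, and the Gronwall loop on $h$ then exponentiates this rate, which is the source (and the sharp form at this level of information) of the $e^{CR^{1/2}}$ bound. The only real obstacle is the absorption step $M\le 2|\langle h\rangle|+CR^{1/2}$, valid only for $R$ large; for $R$ in a bounded interval $[R_0,R_1]$ one invokes directly the continuity of $h$ and the compactness of the interval.
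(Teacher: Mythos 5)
Your proof is correct and follows essentially the same route as the paper's: both rest on the $1/R$ energy decay of Proposition~\ref{energ3}, decompose $U$ and $Ae^{2U}$ into mean plus oscillation controlled by Cauchy--Schwarz on the circle, absorb the $R^{-1/2}\,\|Ae^{2U}\|_{L^\infty(S^1)}$ term for large $R$, and close with a Gronwall argument producing the $e^{CR^{1/2}}$ growth. The only cosmetic difference is that you apply Gronwall to the differential inequality for $\langle Ae^{2U}\rangle$, whereas the paper works with the integrated inequality for $\|Ae^{2U}\|_{L^\infty(S^1)}$ directly.
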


\begin{proof}
For $\theta,\theta' \in S^1$, we have
$$
U(\theta,R)=U(\theta',R)+\int_\theta'^\theta U_\theta(\theta'',R) d \theta''
$$
and, by integration in $\theta'$, we obtain
$$
U(\theta,R)=\frac{1}{2\pi} \int_{S^1}U(\theta',R)d\theta'+ \frac{1}{2\pi}\int_{S^1}\left(\int_{\theta'}^\theta U_\theta(\theta'',R) d \theta''\right)d\theta'. 
$$
Using the Cauchy-Schwarz inequality, we obtain
$$
|U(\theta,R)| \le \left|\frac{1}{2\pi} \int_{S^1}U(\theta',R)d\theta'\right|+(2\pi)^{1/2}\left(\int_{S^1} | U_\theta |^2(\theta'',R) d \theta''\right)^{1/2}.
$$
The second term in the right-hand side behaves like $R^{-1/2}$ thanks to the decay of the energy, while the first term in the right-hand side can be estimated using the energy decay as follows. 
 First, we write 
$$
\left|\int_{S^1}U(\theta',R)d\theta'\right|\le\left|\int_{S^1}\int_{R_0}^{R}U_R(\theta',R')dR'd\theta'\right|+\left|\int_{S^1}U(\theta',R_0)d\theta'\right| 
$$
with the second term in the right-hand side being controlled by the $H^1$ norm of the data for $U$. 
For the first term in the right-hand side, we use again the Cauchy-Schwarz inequality: 
$$
\aligned
 \left|\int_{S^1}\int_{R_0}^{R}U_R(\theta',R')dR'd\theta'\right| 
& \le (2\pi)^{1/2}\int_{R_0}^{R}\left|\int_{S^1} U_R^2(\theta',R')d\theta'\right|^{1/2}dR' 
\\
&\le  K \int_{R_0}^{R}(R')^{-1/2} dR' \le  K R^{1/2}.
\endaligned
$$

Next, we derive the estimates on $A e^{2U}$. As before, for all $\theta,\theta' \in S^1$ and for all $R \ge R_0$, we first write
$$
A e^{2U} (R,\theta)=A e^{2U} (R,\theta')+\int_{\theta'}^{\theta} \left(A e^{2U}\right)_\theta (R,\theta'')d\theta''
$$
and we then integrate in $\theta'$ on $S^1$ in order to obtain
\begin{equation}\label{eq:aeut}
2\pi A e^{2U} (R,\theta)=\int_{S^1}A e^{2U} (R,\theta')d\theta'+ \int_{S^1}\int_{\theta'}^{\theta} \left( A e^{2U}\right)_\theta (R,\theta'')d\theta''d\theta'.
\end{equation}
To estimate the first term in the right-hand side, we use
$$
\aligned
\int_{S^1} Ae^{2U} (R,\theta') d\theta'
=&\int Ae^{2U} (R_0,\theta') d\theta'+\int_{R_0}^R \int_{S^1} A_R e^{2U} (\theta',R') d\theta'dR'
\\
&+\int_{R_0}^R \int_{S^1} A 2 U_R e^{2U} (\theta',R') d\theta'dR',
\endaligned
$$
from which it follows that
$$
\aligned
&\left| \int_{S^1} Ae^{2U} (R,\theta') d\theta'\right|
\\
& \le 
 D + \int_{R_0}^R (2\pi)^{1/2}\left( \int_{S^1} A_R^2 e^{4U} (\theta',R') d\theta'\right)^{1/2}d\theta'dR' 
\\
&\quad +\int_{R_0}^R \left( \int_{S^1} A^2 e^{4U} (\theta',R') d\theta' \right)^{1/2} \left( \int_{S^1} U_R^2(\theta',R') d\theta'\right)^{1/2}dR'. 
\endaligned
$$
Here, we have used the Cauchy-Schwarz inequality twice and $D>0$ is a constant depending only on the norm of the initial data.

By setting $N_A^\infty(R):=||Ae^{2U}(R,\theta)||_{L^\infty(S^1)}$, it follows from the previous inequality and the energy decay that there exists some constant $D >0$ depending only on the norm of the initial data such that, for all $R \ge R_0$: 
\begin{eqnarray*}
\int_{S^1} Ae^{2U} (R,\theta') d\theta' &\le& \left( R^{3/2}+1 \right) D + \int_{R_0}^R \air(R') R'^{-1/2} dR'. 
\end{eqnarray*}
Next, we estimate the second term on the right-hand side of \eqref{eq:aeut} as follows: 
\begin{eqnarray*}
\left| \int_{\theta'}^{\theta} \left(A e^{2U}\right)_\theta (R,\theta'')d\theta''\right| \le \int_{S^1} \left|A_\theta\right| e^{2U}(R,\theta'')d\theta'' + \int_{S^1} \left|A e^{2U} 2 U_\theta\right| (R,\theta'')d\theta'',
\end{eqnarray*}
from which, by applying the Cauchy-Schwarz inequality and using the energy decay, we find 
$$
\left| \int_{\theta'}^{\theta} \left(A e^{2U}\right)_\theta (R,\theta'')d\theta''\right| \le C R^{1/2}+ C R^{-1/2} \air(R) 
$$
for some constant $C>0$ depending only on the norm of the initial data. By combining with our previous estimates, we obtain 
$$
\air(R) \le C R^{1/2} + C R^{-1/2} \air(R)+ D\left( 1 +R^{3/2} \right) + D \int_{R_0}^{R} \air(R') R'^{-1/2} dR'.
$$
To conclude, we note that if $R$ is sufficiently large, we can absorb the second-term on the right-hand side to the left and then apply Gronwall inequality. (Note that on any bounded intervall of time, $\air$ can be estimated directly using the finiteness of the energy.)
\end{proof}

On top of these pointwise estimates on $U$ and $Ae^{2U}$, we will also need the following knowledge on the quantity $\rho:=\eta-U$. For this, let us recall Theorem 1.7 of \cite{Ringstrom}, rewritten here in our notation. 

\begin{theorem}[Ringstr\"om~\cite{Ringstrom}]\label{th:ring17}
Assume that the metric is smooth and is non-homogeneous. Let $\rho:=\eta-U$. Then, there exists some constant $c>0$ such that 
$$
\big\| \frac{d \rho}{dR} - c \big\|_{L^1(S^1)}(R) \lesssim {1 \over R}.
$$
As a consequence, there exists constant $C > 0$, $c>0$ such that for all $R \ge R_0$ and for all $\theta \in S^1$, the following lower bound holds
$$
e^{\rho} (R,\theta) \ge C e^{c R}.
$$
\end{theorem}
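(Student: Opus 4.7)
The proof would be based on the first constraint $\eta_R = RE$ combined with the energy decay $\Escr(R) \lesssim 1/R$ of Proposition \ref{energ3}. Subtracting $U_R$ and completing the square in $U_R$ yields the pointwise identity
\begin{equation}
\label{plan:rhoR}
\frac{d\rho}{dR} = R\Bigl(U_R - \tfrac{1}{2R}\Bigr)^2 - \frac{1}{4R} + R U_\theta^2 + \frac{e^{4U}}{4R}\bigl(A_R^2 + A_\theta^2\bigr),
\end{equation}
in which every summand except $-1/(4R)$ is pointwise non-negative. I would identify the constant $c$ as the asymptotic value of the spatial average $\Psi(R) := \frac{1}{2\pi}\int_{S^1}(d\rho/dR)\,d\theta$ and proceed in three stages: convergence of $\Psi$, control of the spatial oscillation of $d\rho/dR$, and the pointwise lower bound on $e^\rho$.

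\textbf{Convergence of the average.}
Differentiating $\Psi$ and invoking the redundant wave equation \eqref{weakform3} for $\eta_{RR}$ together with the wave equation \eqref{weakform1} for $U_{RR}$, the pure $\theta$-derivatives drop out by periodicity, and re-completing the square gives
\begin{equation}
\Psi'(R) = \frac{1}{4R^2} + \frac{1}{2\pi}\int_{S^1}\Bigl[U_\theta^2 - \bigl(U_R - \tfrac{1}{2R}\bigr)^2 + \frac{e^{4U}}{4R^2}\bigl(A_\theta^2 - A_R^2\bigr)\Bigr]\,d\theta.
\end{equation}
The integrand is the difference between potential and kinetic components of a corrected wave-map energy; the argument rests on an asymptotic equipartition statement forcing this difference to have $O(1/R^2)$ average decay in $R$. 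I would establish this by combining the spacetime dissipation produced by the effective energies $\Escr_b$ (both $b=0$ and $b=1/2$) of Section \ref{se:getdpa} with the uniform bound $\Escr(R) \lesssim 1/R$. This yields the existence of $c := \lim_{R\to\infty}\Psi(R)$ together with the rate $|\Psi(R) - c| \lesssim 1/R$.

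\textbf{Oscillation and pointwise bound.}
Poincar\'e's inequality on $S^1$ and the equality of mixed partials (meaningful in the distributional sense and justified by the approximation scheme of \cite{LeFlochSmulevici1}) give
$$
\Bigl\|\frac{d\rho}{dR} - \Psi(R)\Bigr\|_{L^1(S^1)} \leq C \|\partial_\theta \rho_R\|_{L^1(S^1)} = C\|\partial_R \rho_\theta\|_{L^1(S^1)}.
$$
Using the second constraint $\rho_\theta = RF - U_\theta$, substituting the wave equations \eqref{weakform1}--\eqref{weakform2} whenever second $R$-derivatives of $U$ or $A$ would appear, and integrating by parts on $S^1$ to absorb pure $\theta$-second derivatives, the right-hand side reduces to integrals of products of first derivatives of $U$ and $A$. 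Cauchy--Schwarz combined with the energy decay gives $\|\partial_R\rho_\theta\|_{L^1(S^1)} \lesssim \Escr(R) \lesssim 1/R$, hence $\|d\rho/dR - c\|_{L^1(S^1)} \lesssim 1/R$. Integrating in $R$ and averaging in $\theta$ then yields $\la\rho(R,\cdot)\ra \geq cR - K(1+\log R)$; since $\|\rho_\theta\|_{L^1(S^1)} = \|RF - U_\theta\|_{L^1(S^1)}$ is uniformly bounded in $R$ by Cauchy--Schwarz and the energy decay, one concludes $\rho(R,\theta) \geq cR - K(1+\log R)$ pointwise, and therefore $e^\rho(R,\theta) \geq C e^{c'R}$ for any $0 < c' < c$ (this $c'$ can be renamed in the final statement).

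\textbf{Main obstacle.}
The hardest point is the sharp $1/R$-rate in the convergence of $\Psi$: each summand in the integrand defining $\Psi'$ is individually only of order $1/R$, not $1/R^2$, so the argument hinges on a cancellation between kinetic and potential energies that occurs only in average. Under the weak regularity assumed here, pointwise control on second derivatives is unavailable, and extracting the cancellation at the $H^1$-level for $U$ and $A$ likely requires introducing a further auxiliary functional built from the Gowdy-to-Ernst duality of Section \ref{se:getdpa}, whose time-dissipation directly captures the differences $U_\theta^2 - (U_R - 1/(2R))^2$ and $A_\theta^2 - A_R^2$.
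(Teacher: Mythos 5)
Your reduction is correctly set up: the identity for $d\rho/dR$, the formula for $\Psi'(R)$ obtained from \eqref{weakform3} and \eqref{weakform1}, and the final passage from the $L^1$ statement to the pointwise lower bound on $e^\rho$ (via $\|\rho_\theta\|_{L^1(S^1)}=O(1)$) are all sound. But the two steps that carry the whole weight of the theorem are left open, and neither is obtainable from the material you invoke. First, the convergence $\Psi(R)\to c$ with rate $1/R$ is equivalent to the sharp asymptotics $R\,\Escr(R)\to 2\pi c$; the dissipation identities of Section~\ref{se:getdpa} together with Proposition~\ref{energ3} only give $|\Psi'(R)|\lesssim 1/R$ and the integrability of $\Escr^A/R$ and of $\frac1R\int_{S^1}U_\theta^2$, none of which makes $\Psi'$ integrable, let alone with tail $O(1/R)$. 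The kinetic/potential cancellation you appeal to is precisely Theorem~1.6 of \cite{Ringstrom}, whose proof rests on characteristic (light-cone) estimates and is a substantial argument in its own right; your suggestion that a further Gowdy-to-Ernst functional ``likely'' captures it is speculative, and indeed the paper does not attempt this: it simply imports Theorem~\ref{th:ring17} from \cite{Ringstrom} and only argues (Corollary~\ref{lem:ring17w}) that Ringstr\"om's proof survives at weak regularity because all second derivatives there occur as total derivatives or in combinations replaceable via the field equations. Second, you never address why $c>0$; this is where the non-homogeneity hypothesis must enter (for homogeneous data one has $\Escr\sim R^{-2}$ and $c=0$), and without strict positivity the exponential lower bound $e^\rho\geq Ce^{cR}$ collapses.

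The oscillation step also has a genuine flaw as written. Poincar\'e correctly reduces matters to $\|\partial_\theta\rho_R\|_{L^1(S^1)}$, but $\partial_\theta(RE)-U_{R\theta}$ involves products such as $R\,U_\theta U_{\theta\theta}$ and $R\,U_RU_{R\theta}$ evaluated inside an $L^1(S^1)$ \emph{norm}; one cannot integrate by parts there, and no bound on second derivatives is available --- at the $H^1$ level they do not even exist as functions, and even for smooth metrics the required decay $\|U_{R\theta}\|_{L^2}\lesssim R^{-3/2}$ is a second-order energy estimate needing its own proof. So the oscillation bound, like the convergence of the average, has to be taken from (or reproved along the lines of) \cite{Ringstrom}, which is exactly what the paper does.
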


Now we claim that the statement remains true under our weak regularity assumptions. Indeed, Theorem 1.7 is itself a direct consequence of Theorem 1.6 in \cite{Ringstrom} and one can check that the proof of Theorem 1.6 can be essentially repeated for our class of metrics. Indeed, as can been directly checked from the proof of Theorem 1.6 in Section 9 of \cite{Ringstrom}, all second-order derivatives of the metric appearing in the computation always occur as either a total derivative (and thus can be transformed to boundary terms controlled by the initial norm of our solutions after integration) or in combinations so that one can replace them using the Einstein equations. Thus, all the estimates in this section can be written using only the energy norms of $U$ and $A$ introduced in Section 2.
Thus, we have the following conclusion. 

\begin{corollary}\label{lem:ring17w}
The conclusion in Theorem \ref{th:ring17} holds for weakly regular Gowdy spacetimes.
\end{corollary}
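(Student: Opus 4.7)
The plan is to derive the conclusion by approximating the weakly regular solution $(U,A,\eta)$ by smooth Gowdy solutions $({}^\epsilon U, {}^\epsilon A, {}^\epsilon \eta)$ and transferring the smooth-case statement of Theorem \ref{th:ring17} to the limit. I would first fix a regularising sequence converging to $(U,A,\eta)$ in the topology used throughout the paper, that is, in $C([R_0,+\infty);H^1(S^1)) \cap C^1([R_0,+\infty);L^2(S^1))$ for $U,A$ and in $C([R_0,+\infty);W^{1,1}(S^1)) \cap C^1([R_0,+\infty);L^1(S^1))$ for $\eta$; such sequences are produced in \cite{LeFlochSmulevici1}. Applied to each smooth approximation, Theorem \ref{th:ring17} furnishes constants $c_\epsilon > 0$ satisfying $\bigl\|\tfrac{d\rho_\epsilon}{dR} - c_\epsilon\bigr\|_{L^1(S^1)}(R) \lesssim 1/R$ with $\rho_\epsilon := {}^\epsilon\eta - {}^\epsilon U$.

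The decisive step is to check that the implicit constant and the value $c_\epsilon$ are controlled uniformly in $\epsilon$ by norms stable under this convergence. For this I would audit the computation of Theorem~1.6 in Section~9 of \cite{Ringstrom}: each time a second-order derivative of $U,A$ or $\eta$ appears in an intermediate identity, it must either enter as a total $R$- or $\theta$-derivative, so that integration produces boundary terms controlled by the initial $H^1$-norm of $U,A$ and the $W^{1,1}$-norm of $\eta$, or else occur in a combination matching the evolution equations \eqref{weakform1}--\eqref{weakform3}, in which case it can be traded for first-order quadratic expressions in $U_R, U_\theta, A_R, A_\theta$. After this bookkeeping, every estimate is expressed purely through the energy $\Escr(R_0)$, the spacetime dissipation integrals of Proposition~\ref{energ}, and the pointwise decay $\Escr(R) \lesssim 1/R$ of Proposition~\ref{energ3}, all of which are uniform in the regularisation parameter.

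Once uniform bounds are in place, a subsequential limit $c \geq 0$ of $c_\epsilon$ can be extracted, and the $L^1$ inequality passes to the limit thanks to the chosen convergence of $({}^\epsilon U, {}^\epsilon \eta)$, yielding $\bigl\|\tfrac{d\rho}{dR} - c\bigr\|_{L^1(S^1)}(R) \lesssim 1/R$. Integrating in $R$ and controlling $\rho(R_0,\cdot)$ in $L^\infty(S^1)$ via the proof of Lemma~\ref{lem:reu} then gives $\rho(R,\theta) \geq cR + O(\ln R)$ uniformly in $\theta$, whence $e^{\rho(R,\theta)} \geq C e^{cR}$. Strict positivity of $c$ under the non-homogeneity hypothesis is inherited from the smooth case: any sequence $c_\epsilon \to 0$ would, after passing to the limit, contradict Ringstr\"om's strict lower bound for non-homogeneous approximations, at least once the non-homogeneity of the weakly regular datum has been transferred to ${}^\epsilon U, {}^\epsilon A$ for $\epsilon$ small.

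The main obstacle is clearly the second step: the line-by-line audit of the identities in Section~9 of \cite{Ringstrom} needed to certify that no genuine second-order derivative survives and that every multiplicative constant can be expressed through energy-level quantities alone. The other steps, in particular the passage to the limit and the deduction of the exponential lower bound, are then routine.
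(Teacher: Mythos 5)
Your central observation --- that every second-order derivative occurring in the proof of Theorem 1.6 in Section 9 of \cite{Ringstrom} enters either as a total derivative or in a combination that can be replaced using the Einstein equations, so that all estimates are expressible through the energy quantities of Section~\ref{sec:2} --- is precisely the key point of the paper's own argument. The difference lies in how this observation is used. The paper uses it to repeat Ringstr\"om's proof \emph{directly} on the weakly regular solution, so that the constant $c$ of Theorem~\ref{th:ring17} is constructed from the weak solution itself and its strict positivity follows from the non-homogeneity of that solution exactly as in the smooth case. You instead apply Theorem~\ref{th:ring17} to smooth approximants $({}^\epsilon U,{}^\epsilon A,{}^\epsilon\eta)$ and pass to the limit in the resulting inequalities.

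That limit passage contains a genuine gap: the strict positivity of $c=\lim c_\epsilon$. Ringstr\"om's theorem gives $c_\epsilon>0$ for each fixed non-homogeneous approximant but provides no lower bound uniform in $\epsilon$; a sequence $c_\epsilon\downarrow 0$ is entirely consistent with the smooth statement, and your claim that such a sequence ``would contradict Ringstr\"om's strict lower bound'' is circular, since that lower bound is itself solution-dependent. If $c=0$ survives the limit, the conclusion degenerates to $e^{\rho}\geq C$, which is useless in the completion of the proof of Theorem~\ref{theo:1}: there the exponential growth $e^{\rho}\geq Ce^{cR}$ with $c>0$ is exactly what makes $e^{-2\rho+CR^{1/2}}$ bounded and renders $\int e^{-CR}\,\xid^R\,ds$ finite. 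To close the gap you would need to identify $c_\epsilon$ explicitly in Ringstr\"om's construction (up to normalization it is the limit as $R\to\infty$ of $R$ times the energy of the approximant) and show that it converges to the corresponding quantity of the weak solution, which is positive by non-homogeneity; but this requires the $O(1/R)$ estimates to be uniform in $\epsilon$ and the interchange of the limits $\epsilon\to 0$ and $R\to\infty$ to be justified --- at which point you are effectively re-running the proof on the weak solution itself, which is the shorter route the paper takes.
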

 

\subsection{Estimates based on the angular momentum}

Recall that we have conservation along $\xi$ of the following two angular momenta
$$
\aligned
& \JX(s) := g(\xid,X)=\JX(s_0), 
\\
& \JY(s) := g(\xid,Y)=\JY(s_0).
\endaligned
$$
Hence, we have the following identities for $\xid^X$ and $\xid^Y$.

\begin{lemma}The following identities hold
$$
\aligned
e^{2 U}\left( \xid^X +A \xid^Y \right)(s)&=\JX(s_0), 
\\
e^{2 U} A \left( \xid^X +A \xid^Y \right)(s) + e^{-2 U}R^2 \xid^Y(s)&=\JY(s_0),
\endaligned
$$
which can be inverted and provide 
$$
\aligned
\xid^X&=\left( R^{-2} A^2 e^{2 U} + e^{-2 U} \right) J_X - A e^{2U} R^{-2} J_Y,\\
\xid^Y&=-A e^{2 U } R^{-2}J_X + R^{-2} e^{2 U} J_Y.
\endaligned
$$
\end{lemma}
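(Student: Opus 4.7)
The plan is to prove this by direct computation, using nothing beyond the explicit form of the metric \eqref{300} and the conservation laws already established in the previous lemma (which identifies $g(\dot\xi, X)$ and $g(\dot\xi, Y)$ as constants along the geodesic, equal to their initial values $J_X(s_0)$ and $J_Y(s_0)$).

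First I would read off the relevant components of $g$ in the chart $(R,\theta,x,y)$. Since $X = \partial_x$ and $Y=\partial_y$, expanding the quadratic form \eqref{300} gives
$$
g_{xx} = e^{2U}, \qquad g_{xy} = e^{2U} A, \qquad g_{yy} = e^{2U} A^2 + e^{-2U} R^2.
$$
Lowering the indices of $X$ and $Y$, one obtains $X_\mu \dot\xi^\mu = g_{xx}\dot\xi^X + g_{xy}\dot\xi^Y = e^{2U}(\dot\xi^X + A\dot\xi^Y)$ and $Y_\mu \dot\xi^\mu = g_{xy}\dot\xi^X + g_{yy}\dot\xi^Y = e^{2U}A(\dot\xi^X + A\dot\xi^Y) + e^{-2U} R^2 \dot\xi^Y$. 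Combining with the conservation $g(\dot\xi,X) = J_X(s_0)$ and $g(\dot\xi,Y) = J_Y(s_0)$ yields the first pair of identities stated in the lemma, interpreted almost everywhere in $s$ since $\dot\xi$ is only in $W^{1,1}$.

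Next I would invert the resulting $2 \times 2$ linear system for the unknowns $\dot\xi^X, \dot\xi^Y$. From the first equation, $\dot\xi^X + A\dot\xi^Y = e^{-2U} J_X$; substituting into the second gives $A J_X + e^{-2U} R^2 \dot\xi^Y = J_Y$, whence $\dot\xi^Y = e^{2U} R^{-2}(J_Y - A J_X)$, which is the second formula claimed. Plugging back produces
$$
\dot\xi^X = e^{-2U} J_X - A\dot\xi^Y = (e^{-2U} + A^2 e^{2U} R^{-2}) J_X - A e^{2U} R^{-2} J_Y,
$$
giving the first formula. The system is uniquely invertible because its determinant equals $e^{2U}\cdot e^{-2U}R^2 = R^2 > 0$ throughout the range of the areal time $R \geq R_0 > 0$.

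There is no real obstacle: the only minor point to mention is that each equality should be read pointwise almost everywhere along $s \in [s_0,s_1)$, which is the natural framework given the weak regularity of $\dot\xi$ and of the metric coefficients $U, A$ traced along $\xi$ (whose existence in $L^2$ on timelike curves is guaranteed by Proposition~\ref{lem:408}). Under this understanding, the identities make sense and the algebraic manipulations above are justified at every $s$ where $U \circ \xi$ and $A \circ \xi$ are defined, hence a.e.
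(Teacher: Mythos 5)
Your proof is correct and follows exactly the route the paper intends (the lemma is stated there without a written proof, as an immediate computation): read off $g_{xx}=e^{2U}$, $g_{xy}=e^{2U}A$, $g_{yy}=e^{2U}A^2+e^{-2U}R^2$ from \eqref{300}, use the conservation of $g(\xid,X)$ and $g(\xid,Y)$, and invert the $2\times 2$ system whose determinant is $R^2>0$. Your added remark that the identities hold almost everywhere in $s$, consistent with the trace regularity from Proposition~\ref{lem:408}, is a sensible precision and does not change the argument.
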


As a direct consequence of the estimates of the previous section, we then have
\begin{corollary}
There exists a constand $C$ depending only on the norm of initial data for the metric and the values of $J_X$ and $J_Y$ such that, for all $s \in [s_0,s_1)$,
$$
|\xid^X(s), \xid^Y(s)| \le C \exp{\left(CR^{1/2} \right)}.
$$
\end{corollary}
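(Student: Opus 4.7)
The plan is to substitute the explicit formulas for $\xid^X$ and $\xid^Y$ from the preceding lemma into the pointwise bounds provided by Lemma~\ref{lem:reu}. Since $\JX$ and $\JY$ are conserved and depend only on the initial data, the corollary reduces to controlling the three scalar factors appearing in the identities: $e^{-2U}$, $A e^{2U} R^{-2}$, and $A^2 e^{2U} R^{-2}$.

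The first two bounds are immediate from Lemma~\ref{lem:reu}: the estimate $|U|(R,\theta) \le CR^{1/2}$ gives $e^{-2U}\le e^{2CR^{1/2}}$, and $|Ae^{2U}|(R,\theta) \le C e^{CR^{1/2}}$ yields $|A e^{2U} R^{-2}| \le C R^{-2} e^{CR^{1/2}}$. The only algebraic observation required is to rewrite the remaining factor as
$$
A^2 e^{2U} = (Ae^{2U})^2 \, e^{-2U},
$$
so that it can be controlled by the same two $L^\infty$ estimates. This gives $R^{-2} A^2 e^{2U} \le C^2 R^{-2} e^{4CR^{1/2}}$.

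Plugging these bounds term by term into
$$
\xid^X=\left( R^{-2} A^2 e^{2 U} + e^{-2 U} \right) \JX - A e^{2U} R^{-2} \JY,
\qquad
\xid^Y=-A e^{2 U } R^{-2}\JX + R^{-2} e^{2 U} \JY,
$$
every contribution is dominated by a constant (depending on $|\JX|$, $|\JY|$, and the initial data) times $e^{C' R^{1/2}}$ for some enlarged $C'$. Absorbing the harmless $R^{-2}$ factors into the exponential and redefining the constant gives the desired bound $|\xid^X(s)|,|\xid^Y(s)|\le C\exp(CR^{1/2})$.

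There is no substantive obstacle here: the corollary is a direct repackaging of the $L^\infty$ estimates of Lemma~\ref{lem:reu} through the algebraic inversion of the conservation laws for $\JX$ and $\JY$ performed in the preceding lemma. The only conceptual ingredient is the factorization $A^2 e^{2U}=(Ae^{2U})^2 e^{-2U}$, which ensures that no independent $L^\infty$ bound on $A$ alone is needed.
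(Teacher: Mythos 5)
Your proposal is correct and is exactly the argument the paper intends: the corollary is stated there as a ``direct consequence'' with no written proof, and your substitution of the inversion formulas into the bounds $|U|\le CR^{1/2}$ and $|Ae^{2U}|\le Ce^{CR^{1/2}}$, together with the factorization $A^2e^{2U}=(Ae^{2U})^2e^{-2U}$, is the natural way to carry it out. One trivial omission in your enumeration: the term $R^{-2}e^{2U}J_Y$ in $\xid^Y$ involves the factor $e^{2U}$ (positive exponent), not one of the three factors you list, but it is controlled identically by $e^{2U}\le e^{2|U|}\le e^{2CR^{1/2}}$.
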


As in \cite{Ringstrom}, we introduce the vectors
$e_2=e^{-U} X$ and $e_3=e^{U} R^{-1} \left( Y-AX \right)$, and one can easily check that $g(e_2,e_2)=1=g(e_3,e_3)$ and $g(e_2,e_3)=0$.
Then, define the components $f_2:=g(e_2,\xid)$ and $f_3:=g(e_3,\xid)$. 
In \cite{Ringstrom}, the estimate $1+f_2^2+f_3^2 \le K \exp(Kt^{1/2})$
is established from bounds on second-order derivatives of the metric. Here, we have provided an alternative proof using only first-order derivative of the metric.


\subsection{The structure of the evolution equation for $\xid^R$}

The aim of this section is to clarify the structure of the evolution equation for $\xid^R$.

\begin{lemma}The following differential inequality holds
\be
\label{den50} 
\frac{d}{ds} \left( R^{-1/2} \xid^R  \right)\le - R^{-1/2}\Gamma^R_{ab} \xid^a\xi^b. 
\ee
\end{lemma}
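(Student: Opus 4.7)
My plan is to expand $\frac{d}{ds}(R^{-1/2}\xid^R)$ using the geodesic equation for the $R$--component, isolate the torus contribution $\Gamma^R_{ab}\xid^a\xid^b$, and show that the remaining ``quotient'' terms are non-positive along a timelike geodesic. First, since the metric \eqref{300} is block diagonal between the quotient directions $R,\theta$ and the Killing directions $X,Y$, one has $\Gamma^R_{\alpha a}=0$ whenever $\alpha\in\{R,\theta\}$ and $a\in\{X,Y\}$. A direct computation using $g_{RR}=-g_{\theta\theta}=-e^{2\rho}$ with $\rho:=\eta-U$ gives $\Gamma^R_{RR}=\Gamma^R_{\theta\theta}=\rho_R$ and $\Gamma^R_{R\theta}=\rho_\theta$, so together with the geodesic equation and the identity $\dot R=\xid^R$ I obtain
$$
\frac{d}{ds}\!\left(R^{-1/2}\xid^R\right)=-R^{-1/2}\!\left[\frac{(\xid^R)^2}{2R}+\rho_R\bigl((\xid^R)^2+(\xid^\theta)^2\bigr)+2\rho_\theta\,\xid^R\xid^\theta\right]-R^{-1/2}\Gamma^R_{ab}\xid^a\xid^b.
$$

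The next step is to invoke the constraints \eqref{weakconstraintsr}, namely $\rho_R=RE-U_R$ and $\rho_\theta=RF-U_\theta$, so that the claimed inequality \eqref{den50} reduces to the pointwise (a.e.-in-$s$) bound
$$
Q:=\frac{(\xid^R)^2}{2R}+(RE-U_R)\bigl((\xid^R)^2+(\xid^\theta)^2\bigr)+2(RF-U_\theta)\,\xid^R\xid^\theta\ \ge\ 0.
$$
I verify $Q\geq 0$ by an explicit sum-of-squares identity. Setting $u:=\xid^R$ and $v:=\xid^\theta$, one finds
$$
\aligned
Q=\,& R\!\left[\bigl(U_R-\tfrac{1}{2R}\bigr)u+U_\theta v\right]^{2}+R\!\left[U_\theta u+\bigl(U_R-\tfrac{1}{2R}\bigr)v\right]^{2}\\
&+\frac{e^{4U}}{4R}\!\left[(A_R u+A_\theta v)^{2}+(A_\theta u+A_R v)^{2}\right]+\frac{1}{4R}(u^{2}-v^{2}).
\endaligned
$$
The first three groups of terms are manifestly non-negative, and the last one is non-negative along $\xi$ because the timelike condition $g(\xid,\xid)\le 0$ combined with the block structure of the metric yields $(\xid^R)^2-(\xid^\theta)^2\ge e^{-2\rho}\bigl(e^{2U}(\xid^X+A\xid^Y)^2+e^{-2U}R^2(\xid^Y)^2\bigr)\ge 0$.

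The crux of the argument is identifying this particular decomposition. The shift $U_R\mapsto U_R-\tfrac{1}{2R}$, which is exactly the one used (at $b=1/2$) in the $A$-effective energy $\Escr_b$ of Section~\ref{se:getdpa} and can be traced back to the Gowdy-to-Ernst transformation, precisely absorbs the linear $U_R$ term coming from $\rho_R$; the quadratic overshoot $\tfrac{1}{4R}(u^2+v^2)$ it produces is then matched, for one half, by the explicit curvature-like correction $\tfrac{(\xid^R)^2}{2R}$ arising from the factor $R^{-1/2}$ on the left-hand side of \eqref{den50}, the remaining deficit $\tfrac{1}{4R}(u^2-v^2)$ being absorbed by the timelike character of $\xid$. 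At the weak regularity level of this paper every coefficient appearing in $Q$ is a well-defined $L^1$ trace along $\xi$ by Proposition~\ref{lem:408}; accordingly the identity above will first be obtained for smooth approximating solutions, for which \eqref{weakform1}--\eqref{weakconstraintsr} hold classically, and then passed to the limit using the trace convergence provided by Lemma~\ref{lem:equi}.
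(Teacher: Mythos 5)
Your proposal is correct and follows essentially the same route as the paper: the same Christoffel computation, the same substitution of the constraints with the shift $U_R\mapsto U_R-\tfrac{1}{2R}$ producing the $-\tfrac{1}{4R}$ deficit, and the same use of $(\xid^\theta)^2\le(\xid^R)^2$; your explicit sum-of-squares identity for $Q$ is just a more explicit rendering of the paper's observation that the matrix with diagonal $\eta_R-U_R+\tfrac{1}{4R}$ and off-diagonal $\eta_\theta-U_\theta$ is positive semi-definite (via the Cauchy--Schwarz bound $|\eta_\theta-U_\theta|\le\eta_R-U_R+\tfrac{1}{4R}$). No gaps.
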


\begin{proof}
From the geodesic equation, we have
$$
\aligned 
\ddot \xi^R 
= & -\Gamma^R_{\alpha \beta} \xid^\alpha \xid^\beta 
\\
=& -\Gamma^R_{RR} \left(\xid^R\right)^2 -2\Gamma^R_{R\theta} \xid^R\xi^\theta -\Gamma^R_{\theta \theta} \left(\xid^\theta\right)^2  -\Gamma^R_{ab} \xid^a\xi^b, 
\endaligned
$$
where $a,b$ range over all possible combinations of $X,Y$ and we have used the fact that $\Gamma^R_{Ra}=\Gamma^R_{\theta a}=0$, for $a=X,Y$. We recall the following formula for the Christoffel symbols: 
$
\Gamma^{\alpha}_{\beta \gamma}=\frac{1}{2} g^{\alpha \rho} \left( g_{\rho \beta,\gamma}+g_{\gamma \rho, \beta}+g_{\beta \gamma,\rho} \right)
$: 
\begin{align*}
\Gamma^{R}_{RR}&=\eta_R -U_R, \qquad \Gamma^{R}_{R\theta}=\eta_\theta-U_\theta,\\
\Gamma^{R}_{\theta \theta}&=\eta_R -U_R, \quad \Gamma^{R}_{ab}\,=-\frac{1}{2}e^{-2(\eta-U)}g_{ab,R}.
\end{align*}
In view of the following computation for the Christoffel symbols, we obtain 
$$
\aligned
\ddot \xi^R 
=& -\left( \eta_R -U_R \right) (\xid^R)^2 -2 \left( \eta_\theta -U_\theta \right) \xid^\theta \xid^R
\\
&-\left( \eta_R -U_R \right) (\xid^\theta)^2-\Gamma^R_{ab} \xid^a\xi^b
\endaligned
$$

Recall now the equations for $\eta_R$ and rewrite $\eta_R-U_R$ as follows
$$
\aligned
\eta_R-U_R
=& R \left( U_R^2+ U_\theta^2 \right)+ \frac{e^{4U}}{4R^2} \left( A_R^2+A_\theta^2 \right)-U_R 
\\
=&  R \left( \left(U_R-\frac{1}{2R} \right)^2+ U_\theta^2 \right)+ \frac{e^{4U}}{4R^2} \left( A_R^2+A_\theta^2 \right)-\frac{1}{4R}.
\endaligned
$$
Similarly, we have
$$
\eta_\theta -U_\theta= R \left( 2 \left( U_R-\frac{1}{2R}\right) U_\theta +{e^{4U} \over 2R^2} \, A_R A_\theta \right),
$$
and 
it follows from the Cauchy-Schwarz inequality that
$$
|\eta_\theta -U_\theta| \le \eta_R-U_R+\frac{1}{4R}.
$$

Returning to the geodesic equation for $\xid^R$, we have
$$
\aligned
\ddot \xi^R 
=& -\left( \eta_R -U_R +\frac{1}{4R}\right) (\xid^R)^2 -2 \left( \eta_\theta -U_\theta \right) \xid^\theta \xid^R
\\
&-\left( \eta_R -U_R +\frac{1}{4R}\right) (\xid^\theta)^2+\frac{1}{4R} \left(\xid^R)^2+(\xid^\theta)^2 \right)
-\Gamma^R_{ab} \xid^a\xi^b 
\\
\le& \frac{1}{4R} \left((\xid^R)^2+(\xid^\theta)^2 \right)-\Gamma^R_{ab} \xid^a\xi^b 
\\
\le&\frac{1}{2R} (\xid^R)^2-\Gamma^R_{ab} \xid^a\xi^b, 
\endaligned
$$
where we have observed that the matrix 
$
\left( \begin{array}{cc}\eta_R -U_R +\frac{1}{4R} &  \eta_\theta -U_\theta \\
\eta_\theta -U_\theta &\eta_R -U_R +\frac{1}{4R} 
\end{array}\right)
$
is negative definite in order to derive the first inequality above, and in the last inequality we have used $(\xid^\theta)^2 \le (\xid^R)^2$, which is a consequence of the timelike property of the curve. 

Now, since $\frac{d R}{ds}=\xid^R$, we obtain
$
\ddot \xi^R\le \frac{1}{2}\frac{d }{ds}\left( \ln R \right) \xid^R-\Gamma^R_{ab} \xid^a\xi^b,
$
which is equivalent to \eqref{den50}. 
\end{proof}


\subsection{Christoffel symbols and first derivatives of the metric}

Next, we estimate the right-hand side of \eqref{den50}. From the expression of the Christoffel symbols, the following estimate is immediate. 
in which we use our notation $\mu$ as well as the variable
\be
\rho:=\eta-U.
\ee

\begin{lemma} One has 
$$
\Gamma^R_{ab} \le C e^{-2 \rho} \left( \left(\mu_R\right)^{1/2}+1 \right).
$$
\end{lemma}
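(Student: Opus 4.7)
The plan is to use the formula $\Gamma^R_{ab} = -\tfrac{1}{2}e^{-2\rho}g_{ab,R}$ from the preceding lemma and to control each derivative $g_{ab,R}$ in terms of $(\mu_R)^{1/2}+1$ by exploiting the non-negative decomposition of $\mu_R$ that already emerged in that proof.

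First I would write out the derivatives explicitly from the metric \eqref{300}, namely $g_{XX,R} = 2U_R e^{2U}$, $g_{XY,R} = e^{2U}(A_R + 2AU_R)$, and $g_{YY,R} = 2AA_R e^{2U} + 2A^2 U_R e^{2U} + 2R e^{-2U} - 2R^2 U_R e^{-2U}$. The key input is then the identity already obtained in the proof of the previous lemma:
\[
\mu_R = R\bigl((U_R - 1/(2R))^2 + U_\theta^2\bigr) + \frac{e^{4U}}{4R^2}\bigl(A_R^2+A_\theta^2\bigr) \geq 0,
\]
which yields at once the two pointwise bounds $|U_R - 1/(2R)| \leq (\mu_R/R)^{1/2}$ and $e^{2U}|A_R| \leq 2R(\mu_R)^{1/2}$. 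Splitting $U_R = (U_R - 1/(2R)) + 1/(2R)$ inside each of the expressions above, every term decomposes into a piece controlled by $(\mu_R)^{1/2}$ and a piece bounded by a constant, multiplied in each case by one of the factors $1$, $A$, $e^{2U}$, $A e^{2U}$ or $R^{-1}$. These multiplicative factors are controlled on the range of $R$ under consideration by Lemma \ref{lem:reu}, giving $|g_{ab,R}| \leq C((\mu_R)^{1/2}+1)$, and multiplication by $\tfrac{1}{2}e^{-2\rho}$ then produces the announced inequality.

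The content of the estimate is really just the observation that the two combinations $(U_R - 1/(2R))$ and $e^{2U}A_R/R$ appear as square roots of terms in $\mu_R$; this is precisely why the shift $-1/(2R)$ was introduced and why $\mu = \eta - U + \tfrac{1}{4}\ln R$ was defined in this way. There is no substantial obstacle beyond bookkeeping, the one thing to keep in mind being that the at-most sub-exponential growth $e^{2U} \lesssim e^{CR^{1/2}}$ permitted by Lemma \ref{lem:reu} is compatible with, and will later be dominated by, the exponential decay $e^{-2\rho} \lesssim e^{-2cR}$ from Corollary \ref{lem:ring17w}, so that the product $e^{-2\rho}$ times the constant $C$ remains harmless when inserted in the right-hand side of \eqref{den50} and integrated along the geodesic.
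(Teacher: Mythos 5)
Your argument is correct and is precisely what the paper intends: the paper states this lemma without proof (``From the expression of the Christoffel symbols, the following estimate is immediate''), and the intended justification is exactly your computation of $g_{ab,R}$ combined with the sum-of-squares expression for $\mu_R$ giving $|U_R-1/(2R)|\le(\mu_R/R)^{1/2}$ and $e^{2U}|A_R|\lesssim R\,(\mu_R)^{1/2}$. The only caveat --- which you already flag --- is that the factors $e^{2U}$, $Ae^{2U}$, $R$, etc.\ are not uniformly bounded in $R$, so the ``constant'' $C$ really carries a harmless $e^{CR^{1/2}}$ growth; this imprecision is present in the paper's own statement and is immaterial in the final application, where such factors are absorbed by the exponential decay $e^{-2\rho}\lesssim e^{-2cR}$ of Corollary \ref{lem:ring17w}.
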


Now we give an estimate of $\mu_R$ in terms of a total derivative along the curve.

\begin{lemma} The  estimate
$$
0 \leq \mu_R
\le \frac{ 2 \xid^R e^{2 \rho}}{K^2+ J_X^2 e^{-2U}+(J_Y-A J_X)^2 e^{2 U }R^{-2}}  \, \frac{d\mu}{ds}
$$
holds and, consequently the (weaker) estimate
\begin{equation} \label{ineq:murw}
\mu_R 
\le K^{-2} e^{2 \rho} 2 \xid^R \frac{d \mu}{ds}  
\end{equation}
is satisfied. 
\end{lemma}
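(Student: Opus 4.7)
The plan is to combine three ingredients: a pointwise comparison $|\mu_\theta| \le \mu_R$ coming from the constraint equations, an inequality for $d\mu/ds$ along the timelike curve, and an algebraic identity arising from the mass-shell normalization together with the conservation of the angular momenta.

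First, I would substitute the Hamiltonian constraint $\eta_R = RE$ into the definition $\mu_R = \eta_R - U_R + 1/(4R)$ and complete the square in $U_R$ to obtain
$$
\mu_R = R\bigl[(U_R - 1/(2R))^2 + U_\theta^2\bigr] + \frac{e^{4U}}{4R}(A_R^2 + A_\theta^2),
$$
which is manifestly non-negative and yields the left-hand inequality of the proposition. Inserting next the momentum constraint $\eta_\theta = RF$ gives
$$
\mu_\theta = R\Bigl[2(U_R - 1/(2R)) U_\theta + \frac{e^{4U}}{2R^2} A_R A_\theta\Bigr],
$$
and two applications of $2|ab|\le a^2+b^2$ then yield the pointwise bound $|\mu_\theta| \le \mu_R$. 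This is precisely the bookkeeping already carried out in the proof of Lemma \ref{lem:equi}, and is the most error-prone step, since one has to match constants between the squared form of $\mu_R$ and the Cauchy--Schwarz estimates on the cross-terms in $\mu_\theta$.

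Second, along the future-directed curve $\xi$ one has $\xid^R > 0$ and $|\xid^\theta| \le \xid^R$, so the chain rule combined with the previous bound gives
$$
\frac{d\mu}{ds} = \xid^R \mu_R + \xid^\theta \mu_\theta \ge (\xid^R - |\xid^\theta|)\,\mu_R.
$$
To control $\xid^R - |\xid^\theta|$ from below, I would substitute the explicit expressions $\xid^X + A\xid^Y = J_X e^{-2U}$ and $\xid^Y = (J_Y - AJ_X) e^{2U} R^{-2}$ provided by the preceding lemma into the mass-shell identity $g(\xid,\xid) = -K^2$. The symmetry-orbit contribution computes to $J_X^2 e^{-2U} + (J_Y - AJ_X)^2 e^{2U}R^{-2}$, which together with the $(R,\theta)$ block produces
$$
e^{2\rho}\bigl[(\xid^R)^2 - (\xid^\theta)^2\bigr] = K^2 + J_X^2 e^{-2U} + (J_Y - AJ_X)^2 e^{2U} R^{-2} =: M.
$$
Factoring the left-hand side and using $\xid^R + |\xid^\theta| \le 2\xid^R$ yields $\xid^R - |\xid^\theta| \ge M e^{-2\rho}/(2\xid^R)$.

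Combining these two inequalities delivers $\mu_R \le (2\xid^R e^{2\rho}/M)\, d\mu/ds$, which is the sharp bound stated in the proposition; the weaker estimate \eqref{ineq:murw} follows at once since the two momentum terms in $M$ are non-negative, so $M \ge K^2$. No further obstacles arise: once the algebraic identity for $M$ is in place, the factorization and the final division are immediate.
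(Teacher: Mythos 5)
Your proof is correct and follows essentially the same route as the paper: both rest on the pointwise bound $|\mu_\theta|\le\mu_R$ from the constraints, the chain rule $d\mu/ds=\dot\xi^R\mu_R+\dot\xi^\theta\mu_\theta$, and the mass-shell identity with the conserved momenta to bound $\dot\xi^R-|\dot\xi^\theta|$ from below. Your difference-of-squares factorization with $\dot\xi^R+|\dot\xi^\theta|\le 2\dot\xi^R$ is algebraically the same step as the paper's use of $(1-\alpha)^{1/2}\le 1-\alpha/2$ applied to $\chi=Me^{-2\rho}/(\dot\xi^R)^2$, just packaged more transparently.
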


\begin{proof}
From the conservation of the length of the tangent vector to the timelike geodesic, we have
$$
\aligned
-K^2 e^{-2(\eta-U)} +\left(\xid^R\right)^2
& -e^{-2(\eta-U)} e^{2U} \left( \xid^X + A \xid^Y \right)^2
\\
& - e^{-2(\eta-U)} e^{-2U} R^2 \left( \xid^Y \right)^2 = \left(\xid^\theta\right)^2
\endaligned
$$ 
and 
$$
\frac{d \rho}{ds}
=\left(\eta-U \right)_R \xid^R +\left(\eta-U \right)_\theta \xid^\theta
=\mu_R \xid^R +\left(\eta-U \right)_\theta \xid^\theta- \frac{1}{4R}\xid^R .
$$
Note, moreover, that
$$
\aligned
\left(\frac{ \xid^\theta}{ \xid^R} \right)^2 
&= 1-\frac{ e^{-2 \rho}}{(\xid^R)^2}\left(K^2+ J_X^2 e^{-2U}+(J_Y-A J_X)^2 e^{2 U }R^{-2} \right) 
\\
&\le 
1- \frac{K^2 e^{-2 \rho}}{(\xid^R)^2}.
\endaligned
$$

Setting $\chi=\frac{K^2 e^{-2 \rho}}{(\xid^R)^2}$, we then have
$$
\aligned
\frac{d \rho}{ds}
& =\left(1 - (1-\chi)^{1/2}\right)\mu_R \xid^R - \frac{1}{4R}\xid^R+ (1-\chi)^{1/2} \, \mu_R \, \xid^R 
 + \rho_\theta \xid^\theta 
\\
& \geq \left(1-(1-\chi)^{1/2}\right) \, \mu_R  \xid^R - \frac{1}{4R}\xid^R, 
\endaligned
$$
by 
using that $\rho_R+\frac{1}{4R} \ge |\rho_\theta|$ and $\rho_R+\frac{1}{4R} \ge 0$.
Hence, we obtain 
$$
\mu_R
\leq
 \frac{1}{ {\dot{\xi^R}} }(1-(1-\chi)^{1/2})^{-1}  \frac{d\mu}{ds}.
$$
Recall now that for all $0 \le \alpha \le 1$, we have $(1-\alpha)^{1/2} \le 1-\frac{\alpha}{2}$.
Hence, we may estimate 
$$
1- \left( 1-\frac{K^2 e^{-2 \rho}}{(\xid^R)^2}\right)^{1/2} \ge \frac{K^2 e^{-2 \rho}}{2 \left( \xid^R\right)^2}
$$
and we obtain the weaker estimate in the lemma:  
$$
\rho_R+\frac{1}{4R} \le K^{-2} e^{2 \rho} 2 \xid^R \frac{d \mu}{ds}.  
$$
The stronger estimate is derived by the same method, but using now  
$$
\chi=\frac{ e^{-2 \rho}}{(\xid^R)^2}\left(K^2+ J_X^2 e^{-2U}+(J_Y-A J_X)^2 e^{2 U }R^{-2}\right).
$$ 
\end{proof}

In view of the $L^\infty$ estimates on $U$, we now reach the following conclusion. 

\begin{corollary} There exists a constant $C \ge 0$, depending only on the norm of the initial data, such that, along $\xi(s)$, the following inequality holds
  \begin{equation} \label{ineq:fin}
\mu_R \le \frac{e^{2 \rho}}{K^2+C \exp\left({-C R^{1/2}}\right)} 2 \xid^R \frac{d \mu}{ds}.
\end{equation}
Moreover, the constant $C$ appearing in the denominator on the right-hand side is stricly positive unless both angular momentum $J_X$ and $J_Y$ vanishes.
\end{corollary}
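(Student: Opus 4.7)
The plan is to derive this corollary as an immediate consequence of the stronger estimate in the preceding lemma, combined with the $L^\infty$ bounds on $U$ (and $Ae^{2U}$) provided by Lemma \ref{lem:reu}. Explicitly, starting from
$$
\mu_R \le \frac{2\, \xid^R e^{2\rho}}{K^2+ J_X^2 e^{-2U}+(J_Y-A J_X)^2 e^{2 U }R^{-2}}\, \frac{d\mu}{ds},
$$
it suffices to bound below the denominator by $K^2 + C e^{-CR^{1/2}}$ for some $C>0$ depending only on the initial data and on $(J_X,J_Y)$, provided these are not both zero.

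First, I would recall from Lemma \ref{lem:reu} the pointwise bound $|U|(R,\theta) \le C_0 R^{1/2}$, which yields both $e^{-2U} \ge e^{-2C_0 R^{1/2}}$ and $e^{2U} \ge e^{-2C_0 R^{1/2}}$ pointwise in $(R,\theta)$. I would then split into the following three cases.

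\emph{Case 1: $J_X\neq 0$.} Since $J_X^2 e^{-2U} \ge J_X^2 e^{-2C_0 R^{1/2}}$ and the $(J_Y-AJ_X)^2 e^{2U}R^{-2}$ term is nonnegative, the denominator is bounded below by $K^2 + J_X^2 e^{-2C_0 R^{1/2}}$, which is of the required form. \emph{Case 2: $J_X=0$ but $J_Y\neq 0$.} Then the cross term simplifies to $J_Y^2 e^{2U}R^{-2}$, so the denominator is bounded below by $K^2 + J_Y^2 R^{-2} e^{-2C_0 R^{1/2}}$. The factor $R^{-2}$ can be absorbed into the exponential, since $\ln(R^{-2}) = -2 \ln R = o(R^{1/2})$ as $R\to+\infty$, while $R^{-2} \ge R_0^{-2}$ for $R$ in any compact subinterval of $[R_0,+\infty)$; enlarging the constant $C_0$ if necessary therefore yields a lower bound of the form $K^2 + C e^{-C R^{1/2}}$. \emph{Case 3: $J_X=J_Y=0$.} The denominator reduces to $K^2$, so the constant in the denominator may simply be chosen to vanish.

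The only mild point to watch is the absorption of the $R^{-2}$ factor in Case 2; this is routine but should be stated explicitly. No estimate involving $Ae^{2U}$ is actually required for the lower bound, since whenever $J_X\neq 0$ the first term $J_X^2 e^{-2U}$ already supplies a positive lower bound, and when $J_X=0$ the combination $(J_Y-AJ_X)^2$ reduces to $J_Y^2$ and $A$ disappears. Combining the three cases produces the stated inequality \eqref{ineq:fin} together with the last assertion regarding strict positivity of the constant.
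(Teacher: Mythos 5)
Your proof is correct and follows essentially the same route as the paper: the same three-way case split on $(J_X,J_Y)$, dropping the $J_Y$-term when $J_X\neq 0$, and invoking the bound $|U|\le C R^{1/2}$ from Lemma \ref{lem:reu} to bound $e^{\pm 2U}$ from below. Your explicit treatment of the absorption of the $R^{-2}$ factor in the case $J_X=0$, $J_Y\neq 0$ is a point the paper leaves implicit, but it is the same argument.
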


\begin{proof}
If both $J_X$ and $J_Y$ vanishes, the inequality (with $C=0$) is the simply inequality \eqref{ineq:murw}. If $J_X=0$ but $J_Y \neq 0$, then \eqref{ineq:fin} follows using that 
$$0 \le e^{2U} \le e^{2 |U|} \le e^{ C R^{1/2}},$$
using Lemma \ref{lem:reu}.
If $J_X \neq 0$, we may first drop the positive term containing $J_Y$ to obtain
\begin{eqnarray*}
\mu_R
&\le& \frac{ 2 \xid^R e^{2 \rho}}{K^2+ J_X^2 e^{-2U}+(J_Y-A J_X)^2 e^{2 U }R^{-2}}  \, \frac{d\mu}{ds} \\
&\le& \frac{ 2 \xid^R e^{2 \rho}}{K^2+ J_X^2 e^{-2U}}  \, \frac{d\mu}{ds}
\end{eqnarray*}
and we proceed as in the previous case, using that $J_X \neq 0$. 
\end{proof}


\subsection{Completion of the proof}

We can now establish the geodesic completeness property. 

\begin{lemma} One has 
$R(\xi(s)) \to +\infty$ when $s \to s_1$.
\end{lemma}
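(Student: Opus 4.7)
The plan is to argue by contradiction. Since $\xi$ is future-directed and timelike we have $\dot\xi^R>0$, so $R(\xi(s))$ is strictly increasing and approaches some limit $R_*\in(R_0,+\infty]$; I assume for contradiction that $R_*<+\infty$. Observe first that in this case the curve stays in the compact region $[R_0,R_*]\times S^1$ of the quotient; combining the $L^\infty$ bounds on $U$ and $Ae^{2U}$ from Lemma~\ref{lem:reu} with the continuity of $\eta\in C([R_0,+\infty);W^{1,1}(S^1))$ and the Sobolev embedding $W^{1,1}(S^1)\hookrightarrow L^\infty(S^1)$, the quantities $U$, $Ae^{2U}$, $\eta$, and hence $\rho=\eta-U$, are uniformly bounded along $\xi$.

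Suppose now $s_1=+\infty$. Using the conservation law $g(\dot\xi,\dot\xi)=-K^2$ rewritten with the aid of the conserved angular momenta,
$$
(\dot\xi^R)^2=(\dot\xi^\theta)^2+e^{-2\rho}\bigl(K^2+e^{-2U}J_X^2+e^{2U}R^{-2}(J_Y-AJ_X)^2\bigr)\ge K^2 e^{-2\rho},
$$
the uniform upper bound on $\rho$ gives $\dot\xi^R\ge c_0>0$ throughout $[s_0,+\infty)$, and integration forces $R(s)\ge R(s_0)+c_0(s-s_0)\to+\infty$, contradicting $R\le R_*$. Hence $s_1<+\infty$.

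To rule out the remaining case $s_1<+\infty$, I show by an extension argument that all components of $\dot\xi$ are uniformly bounded on $[s_0,s_1)$; then $\xi$ admits a limit in $\mathcal M$ at $s_1$, and Proposition~\ref{prop:408} produces an extension past $s_1$, contradicting maximality. The components $\dot\xi^X,\dot\xi^Y$ are bounded by the explicit formulas derived earlier for them in terms of the (bounded) quantities $J_X$, $J_Y$, $A$, $U$, $R$. To bound $\dot\xi^R$, I integrate the almost monotonicity inequality \eqref{den50}, controlling the error term via the bound $|\Gamma^R_{ab}|\le Ce^{-2\rho}((\mu_R)^{1/2}+1)$ from the preceding lemma; with $e^{-2\rho}$ and $\dot\xi^a\dot\xi^b$ bounded, the task reduces to bounding $\int_{s_0}^{s_1}(\mu_R)^{1/2}\,ds$. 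Applying Cauchy--Schwarz together with \eqref{ineq:fin} yields
$$
\int_{s_0}^{s_1}(\mu_R)^{1/2}\,ds\le (s_1-s_0)^{1/2}\Bigl(C\int_{s_0}^{s_1}\dot\xi^R\,\frac{d\mu}{ds}\,ds\Bigr)^{1/2},
$$
and since $d\mu/ds\ge 0$ and $\mu=\eta-U+(1/4)\ln R$ is bounded along $\xi$, the integral $\int_{s_0}^{s_1}\!d\mu/ds\,ds$ is uniformly bounded. Setting $f(s):=\sup_{[s_0,s]}R^{-1/2}\dot\xi^R$, these estimates produce an inequality of the form $f(s)\le A+B\,f(s)^{1/2}$, which forces $f$ to be uniformly bounded on $[s_0,s_1)$; hence $\dot\xi^R$ is bounded, and $|\dot\xi^\theta|\le\dot\xi^R$ is bounded too, completing the extension argument.

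The main obstacle is the circular dependence in the last step: bounding $\dot\xi^R$ via \eqref{den50} requires integrability of an error term whose control through \eqref{ineq:fin} itself involves $\dot\xi^R$ via the product $\dot\xi^R\,d\mu/ds$. The resolution is a bootstrap on $f(s)=\sup_{[s_0,s]}R^{-1/2}\dot\xi^R$, which closes thanks to the positivity of $d\mu/ds$ (a consequence of the timelike character of $\xi$ together with the inequality $|\mu_\theta|\le\mu_R$ already used in Lemma~\ref{lem:equi}) and the uniform boundedness of $\mu\circ\xi$ on the bounded region traversed when $R_*<+\infty$.
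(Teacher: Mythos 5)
Your argument is correct and follows essentially the same route as the paper: assume $R\circ\xi$ stays bounded, use compactness of the region traversed together with the almost-monotonicity inequality \eqref{den50}, the bound on $\Gamma^R_{ab}$, the estimate \eqref{ineq:fin} and the positivity of $d\mu/ds$ to bound $\xid^R$, and conclude either by the extension/maximality argument (if $s_1<\infty$) or by integrating a uniform lower bound $\xid^R\ge c_0>0$ (if $s_1=\infty$). The only minor differences are that you close the self-referential estimate for $\xid^R$ by a Cauchy--Schwarz-plus-absorption bootstrap $f\le A+Bf^{1/2}$ where the paper invokes Gronwall's lemma, and that you obtain the lower bound on $\xid^R$ directly from the conservation law $g(\xid,\xid)=-K^2$ rather than via Lemma \ref{lem:qut}.
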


\begin{proof}  
Assume that $R(\xi(s)) < +\infty$. Then, $\xi(s)$ stays in a compact region of $\Mcal$ in which, by continuity, $U$, $A$, $\eta$ are uniformly bounded. 
Moreover, from the above analysis,
$$
\left| \frac{d}{ds} \left( R^{-1/2} \xid^R \right)\right| \le |R^{1/2} \Gamma^R_{ab} \xid^a \xid^b| \le F(R(\xi(s)) \left( \xid^R + \frac{d \mu}{ds} \right)
$$
where $F(R)$ is a positive continuous function of $R$, and therefore, is uniformly bounded on any compact interval. By Gronwall lemma, it follows that $\xid^R$ is uniformly bounded and since the curve is timelike, we obtain uniform bounds on all components of $\xid$ and then $W^{2,1}$ bounds on $\xi$ from the previous inequality. In particular, the curve is uniformly timelike for the flat quotient metric as in Lemma \ref{lem:qut}, i.e. there exists a constant $D > 0$ such that for all $s \in [s_0,s_1]$, 
$$
\xid^R \ge D + |\xid^\theta|.
$$
Now either $s_1 <+\infty$, in which case, in view of the previous estimates, one can continue the solution beyond $s_1$, contradicting the fact that the curve is maximal, or $s_1=+\infty$, but then the previous inequality implies, after integration in $s$, that $R(\xi(s)\rightarrow +\infty$ as $s\rightarrow+\infty$.
\end{proof}

\noindent\emph{Completion of the proof of Theorem \ref{theo:1}.} Since 
$R(\xi(s))-R(\xi(s_0))=\int_{s_0}^{s} \xid^R(s) ds$, it follows from the previous lemma that a uniform upper bound such as $\xid^R < C $ implies that $s_1=\infty$. The same is true if we do not have a uniform upper bound on $\xid^R$, but the bound 
$\xid^R(s) \le C R^{p}(s)$  
for some $C> 0$ and $p \in [0,1)$. 

Our estimate on $\mu_R=\rho_R+\frac{1}{4R}$ yields 
$$
\aligned
\left| R^{-1/2}\Gamma^R_{ab} \xid^a \xid^b \right|
&\le R^{-1/2} C e^{-2\rho} \left( (\mu_R)^{1/2} +1 \right) C e^{C R^{1/2}} 
\\ 
&\le C R^{-1/2} e^{-2\rho+C R^{1/2} } + C R^{-1/2} e^{-\rho} (\xid^R)^{1/2}\frac{d}{ds}\left( \rho +\frac{1}{4R} \right)^{1/2}.
\endaligned
$$
The first term on the right is uniformly bounded in view of Corollary~\ref{lem:ring17w}, and therefore integrable provided that $s_+ < +\infty$. For the second term, we apply the Cauchy-Schwarz inequality to $(\xid^R)^{1/2}\frac{d}{ds}\left( \rho +\frac{1}{4R} \right)^{1/2}$ and  estimate
$$
\int_{s_0}^{s}R^{-1/2} e^{-\rho} \xid^R  ds\le \int_{s_0}^{s}C e^{-C R} \xid^R ds \le  C e^{-C R(s_0)}, 
$$
using again Corollary~\ref{lem:ring17w}. Finally, we also have
$$
\aligned
&\int_{s_0}^s e^{-\rho} \, R^{-1/2} \frac{d}{ds}\left( \rho +\frac{1}{4} \ln R \right) \, ds 
\\
&\leq 
R^{-1/4}(s_0) 
\int_{s_0}^s e^{-\rho} \, R^{-1/4} \frac{d}{ds}\left( \rho +\frac{1}{4} \ln R \right) \, ds 
 \leq 
e^{-\rho}R^{-1/2}(s_0), 
\endaligned
$$
which is uniformly bounded. This implies that $|\xid^R| \leq C \, R^{1/2}$ and completes the proof of Theorem \ref{theo:1}.


\section*{Acknowledgments}

This paper was completed when the second author (PLF) was spending the Fall Semester 2013 at the Mathematical Sciences Research Institute (MSRI), Berkeley, thanks to a financial support from the National Science Foundation under Grant No. 0932078 000.  The authors were supported by the Agence Nationale de la Recherche through the grant ANR SIMI-1-003-01 (Mathematical General Relativity. Analysis and geometry of spacetimes with low regularity). 


\small

\end{document}